\documentclass[aps,pre,twocolumn,showkeys,superscriptaddress,floatfix
]{revtex4-2}

\usepackage{amsmath,amssymb,amsthm,amscd,latexsym}

\usepackage{graphicx}
\usepackage{color}
\usepackage{tikz}
\usetikzlibrary{arrows}
\usepackage{subfigure}
\usepackage{epsfig}
\graphicspath{{./}}

\usepackage{array}
\usepackage[section]{placeins}
\usepackage{enumitem}
\usepackage{float}
\usepackage{algorithm}
\usepackage{algpseudocode}
\usepackage{listings}
\usepackage{verbatim}

\usepackage[utf8]{inputenc}
\usepackage{indentfirst}
\usepackage[colorlinks=true,allcolors=blue]{hyperref}

\newtheorem{proposition}{Proposition}[section]


\begin{document}

\title{Higher-Order Co-Infection on Hypergraphs}

\author{Katrina Liu}
\affiliation{
Canyon Crest Academy, San Diego, California, USA
}

\author{Laura P.\ Schaposnik}
\affiliation{
MSCS, LQuTE, AIRSHIP Lab, 
University of Illinois Chicago, USA
}
\affiliation{
NSF-Simons National Institute for Theory and Mathematics in
Biology, Chicago, IL, USA
}
\affiliation{
Mathematical Institute and Magdalen College,
University of Oxford, UK
}

\begin{abstract}
People encounter infections through overlapping groups, such as
households, classrooms, and workplaces. Understanding how these
shared contacts affect the spread of multiple pathogens is critical to determining what information epidemic models need to retain.
We introduce HyCo-SIS (Higher-order Co-infection SIS), a
discrete-time SIS model on hypergraphs with explicit reinforcement
between overlapping active groups and pairwise transmission.
We investigate how this reinforcement affects infection dynamics
and whether simplified representations of hypergraphs preserve the
resulting behavior. We derive an exact representation of overlap
reinforcement as target-dependent four-person interactions and
determine when a symmetric representation is possible. Numerical
experiments examine reinforcement strength, contact structure,
and susceptibility to secondary infection, as well as calibrated
pairwise approximations. Our results demonstrate that redistributing
overlap weights can substantially change finite-time co-infection
even when every individual's total overlap coefficient is preserved.
We further derive an exact five-person calculation and demonstrate
why equal mean infection pressure does not necessarily produce
equal acquisition probabilities. These findings identify how the
allocation of higher-order interactions can influence contagion
dynamics beyond their aggregate strength. Beyond epidemiology,
this work contributes to the statistical physics of spreading
processes by examining how the arrangement of many-body interactions
shapes collective dynamics, and what information is lost when those
interactions are simplified.
\end{abstract}

\keywords{
Hypergraphs; higher-order contagion; SIS model; co-infection;
overlap reinforcement; graph projections
}

\maketitle
\raggedbottom

\setcounter{tocdepth}{2}
 \twocolumngrid
\raggedbottom
\setlength{\textfloatsep}{10pt plus 2pt minus 2pt}
\setlength{\dbltextfloatsep}{10pt plus 2pt minus 2pt}
\setlength{\floatsep}{8pt plus 2pt minus 2pt}
\setlength{\dblfloatsep}{8pt plus 2pt minus 2pt}
\section{Introduction}

Infections rarely travel alone, and they rarely travel along single contacts. A child can bring one virus home from a classroom and another from a sports team, and the same classmate may sit in several of those groups at once. This picture has inspired much research over the last few decades (see, e.g., \cite{pastorsatorras2015epidemicreview,keeling2005networks,battiston2020higherorder,iacopini2019simplicial,sahneh2014competitive,poletto2015competing,stonge2021school} and references therein), and it has run through much of the second author's earlier work: from modelling epidemics on $d$-cliqued graphs, where people meet in tight overlapping clusters rather than one at a time \cite{zhang2018dcliqued}, to percolation for mutating infectious diseases \cite{luo2020percolating} and age-structured models for COVID-19 type viruses \cite{ram2021agesir}, to correlations between COVID-19 and dengue \cite{SchaposnikEtAl2023SciRep} and, more recently, network models of super-infection, co-infection and hotspot growth for two pathogens \cite{SchaposnikEtAl2025a,SchaposnikEtAl2025b}.

Mathematical models let us ask how this structure shapes transmission, but what they predict depends on how the structure is represented. Counting a person's contacts does not tell us who meets together, or which groups share members. The question behind this paper is therefore a simple one: which features of group contact must a model keep before it can reliably compare spreading outcomes?

Consider again students who meet in overlapping classroom, lunch and study groups. The same infectious student can appear in several of another student's group exposures, and since group transmission depends on the collective state of the group, rearranging who meets with whom can change infection probabilities even when every individual's contact total stays the same. Figure \ref{fig:mechanism} shows a small example with five students and three overlapping groups.




\begin{figure}[!t]
\centering
\begin{tikzpicture}[scale=1.15,
  every node/.style={font=\small},
  contact/.style={draw=black!55,line width=0.9pt}]
\definecolor{pathA}{HTML}{2A6F97}
\definecolor{pathB}{HTML}{C0623F}
\definecolor{fillLunch}{HTML}{F6E6D8}
\definecolor{fillStudy}{HTML}{D4E8E4}
\definecolor{fillClub}{HTML}{E6E7F2}
\coordinate (n0) at (2.9,3.6);
\coordinate (n1) at (1.4,0.9);
\coordinate (n2) at (3.7,1.1);
\coordinate (n3) at (0.5,3.9);
\coordinate (n4) at (5.2,3.9);
\fill[fillLunch] (n3)--(n0)--(n1)--cycle;
\fill[fillStudy] (n0)--(n1)--(n2)--cycle;
\fill[fillClub]  (n0)--(n2)--(n4)--cycle;
\draw[contact] (n3)--(n0) (n3)--(n1) (n0)--(n1) (n1)--(n2) (n0)--(n2) (n0)--(n4) (n2)--(n4);
\node[fill=white,fill opacity=.7,text opacity=1,inner sep=1pt] at (1.6,2.6) {$g_2$: lunch};
\node[fill=white,fill opacity=.7,text opacity=1,inner sep=1pt] at (2.65,1.95) {$g_1$: study};
\node[fill=white,fill opacity=.7,text opacity=1,inner sep=1pt] at (3.95,2.95) {$g_3$: club};
\def\r{3.2mm}
\fill[pathA] (n0) circle (\r);  \fill[white] (n1) circle (\r);
\fill[pathA] (n3) circle (\r);  \fill[pathB] (n4) circle (\r);
\fill[pathB] (n2) circle (\r);
\fill[pathA] (n2) ++(0,\r) arc[start angle=90,end angle=270,radius=\r]--cycle;
\foreach \n in {n0,n1,n2,n3,n4} \draw[black!80,line width=0.9pt] (\n) circle (\r);
\node[above=3.5mm] at (n0) {$0$};
\node[below=3.5mm] at (n1) {$1$ (target)};
\node[below=3.5mm] at (n2) {$2$};
\node[above=3.5mm] at (n3) {$3$};
\node[above=3.5mm] at (n4) {$4$};
\begin{scope}[shift={(-0.35,-0.55)},every node/.style={font=\footnotesize,anchor=west}]
\def\s{1.3mm}
\fill[white] (0,0) circle (\s); \draw[line width=0.6pt] (0,0) circle (\s); \node at (0.12,0) {susceptible};
\fill[pathA] (1.75,0) circle (\s); \draw[line width=0.6pt] (1.75,0) circle (\s); \node at (1.87,0) {pathogen $1$};
\fill[pathB] (3.65,0) circle (\s); \draw[line width=0.6pt] (3.65,0) circle (\s); \node at (3.77,0) {pathogen $2$};
\fill[pathB] (5.55,0) circle (\s);
\fill[pathA] (5.55,\s) arc[start angle=90,end angle=270,radius=\s]--cycle;
\draw[line width=0.6pt] (5.55,0) circle (\s); \node at (5.67,0) {both};
\end{scope}
\end{tikzpicture}
\caption{Five students in three overlapping groups,
$g_1:=\{0,1,2\}$, $g_2:=\{0,1,3\}$, and
$g_3:=\{0,2,4\}$. For susceptible student~1, students
0, 2, and 3 carry pathogen~1. The study and lunch groups
are therefore both active for this target and share
infectious source~0. Student~1 receives three pairwise
contributions, two active-group contributions, and one
overlap contribution. At the experimental pairwise,
triangle, and reinforcement rates of $0.005$, $0.1$,
and $0.4$, respectively, the total infection pressure is
$3(0.005)+2(0.1)+0.4=0.615$, giving acquisition
probability $1-e^{-0.615}\approx0.459$.
}
\label{fig:mechanism}
\end{figure}

In this paper we introduce a two-pathogen discrete-time model based on the standard SIS (Susceptible-Infected-Susceptible) model with collective activation and additional reinforcement from overlapping active groups, \emph{HyCo-SIS} (Higher-order Co-infection SIS). The model focuses on three transmission channels: exposure through pairwise contacts, collective exposure within triangle groups, and reinforcement between overlapping active triangles. For a given pathogen, a triangle is active for a target when both other members carry that pathogen. The overlap term $\gamma$ adds infection pressure when two such active triangles share another member besides the target, while a separate parameter $\alpha$ scales susceptibility to acquiring a second pathogen. 

The model also poses a question in statistical physics about the relationship between microscopic interactions and collective behavior. For each pathogen, carriage and recovery correspond to activation and deactivation of a site. Activity can spread through pairwise contacts and collective group activation, and the same total interaction strength can be distributed among different sets of neighbors. We ask whether redistribution changes their collective dynamics. Further, by comparing the model with simplified versions of it, we examine the limits of coarse-graining and ask which details can be discarded while preserving the expected outcome.

Classical network epidemic models provide a foundation for this investigation. In standard pairwise SIS and SIR formulations, individuals are represented by nodes and transmission opportunities by edges \cite{pastorsatorras2015epidemicreview,newman2002spread,keeling2005networks}. Hypergraphs extend this description by allowing an interaction to involve more than two individuals; simplicial complexes additionally include the lower-dimensional faces of each simplex \cite{battiston2020higherorder,benson2018simplicial,torres2021simplicial}, supporting infection rules that depend on several sources acting together. Past research demonstrated that collective transmission over simplicial contagion models can produce discontinuous transitions and bistability in settings where its pairwise counterpart does not \cite{iacopini2019simplicial}. Subsequent work has examined heterogeneous hypergraphs, higher-order spreading with indirect pathways, and the effects of hyperedge overlap \cite{landry2020effect,cui2024general,malizia2025overlap}.

Two further aspects of research motivate our comparisons. Studies of interacting epidemics show how competition, cooperation, and altered susceptibility can affect simultaneous spreading processes \cite{sahneh2014competitive,poletto2015competing,granell2013dynamical}. In addition, research on cluster synchronization shows that distinct hypergraphs can share the same pairwise projection, so the projected graph need not retain enough information to determine the original higher-order dynamics \cite{salova2021projection}. Nevertheless, suitably adjusted pairwise epidemic models can approximate higher-order aggregate trajectories \cite{tan2026replicating}. Exact representation of microscopic transitions and approximate prediction of selected observables are therefore different questions. Our model HyCo-SIS builds on these established ideas, and we investigate a specified overlap rule and how its allocation among source groups affects spreading.

\paragraph*{Summary of results.}
Our main findings can be summarized as follows.
\begin{itemize}
\item {\bf Model and exact structure.} We introduce HyCo-SIS, a two-pathogen SIS model on hypergraphs with pairwise, triangle and overlap-reinforcement channels, and show that overlap reinforcement is exactly a sum of target-dependent four-person interactions, characterizing when it admits a symmetric representation (Sec. \ref{subsec:overlap-model}, \ref{subsec:four-person}).
\item {\bf Limits of pairwise information.} In the single-pathogen
mean-hazard closure, the higher-order channels do not enter the
disease-free linearization, so local stability is governed by the
pairwise channel. The reported ensemble-mean spectral ratios lie
below this closure's pairwise stability boundary
(Sec.~\ref{subsec:linearization},\ref{subsec:seed}), and we show that weighted pairwise
projection need not determine the higher-order transition
probabilities (Sec.~\ref{subsec:nonidentifiability}).
\item {\bf Seeds, overlap and projections.} Outbreaks depend   on introduction size, overlap reinforcement raises co-infection when triangles share edges. The fitted pairwise projections do not reproduce terminal-window co-infection and outbreak frequency on the held-out graphs.(Sec.~\ref{subsec:seed}, \ref{subsec:overlap-results},\ref{subsec:projections}).
\item {\bf Allocation matters.} Redistributing overlap weights changes finite-time co-infection
substantially even when each individual's total overlap
coefficient is preserved. An exact five-person calculation
provides a one-step mechanism consistent with the observed
acquisition differences, through the nonlinear conversion of
infection pressure into acquisition probability (Sec.~\ref{subsec:allocation-results},\ref{subsec:five-person}).
\end{itemize}

The remainder of this paper is organized as follows. Section \ref{sec:background} reviews related work on higher-order contagion and graph projections. Section \ref{sec:model} defines HyCo-SIS, its transmission and recovery rules, and the pairwise representations used for comparison. Section \ref{sec:analytic} presents the disease-free linearization under a mean-hazard closure and an exact example of information lost through pairwise projection. Section \ref{sec:numerics} reports the numerical experiments on network structure, initial infection levels, reinforcement, and pairwise approximations. Section \ref{sec:allocation} develops the four-person representation of overlap reinforcement, examines two weight redistribution controls, and presents the identical state diagnostics and exact five-person calculation. Finally, Section \ref{sec:discussion} summarizes the findings, discusses possible extensions to empirical contact data, and places the model in the context of statistical physics.

Our goal is to determine which features of higher-order exposure are lost when interaction weights are redistributed or projected onto pairwise contacts. We compare the original overlap rule with a symmetric control preserving group totals and a second control preserving target totals, at $\alpha=1$ and $2$. Calibrated pairwise projections provide a complementary comparison.

\section{Background and Related Work}
\label{sec:background}

A hypergraph is a mathematical structure used to record group memberships, marking which individuals belong to the same group. A related structure is a simplicial complex, which requires downward closure, where every subset of a simplex must also be included. Previous work has studied collective activation rules, in which infection is triggered only when several members of a group are infected together; generalized nonlinear diffusion, in which spreading departs from simple linear laws; and the effect of structural heterogeneity on epidemic outcomes \cite{berge1984hypergraphs,iacopini2019simplicial,kiss2024exact,cui2024general,landry2020effect,stonge2021school}. Building on these studies, we compare group-level information at the microscopic scale with the predictions obtained after aggregating it into pairwise contacts.

Another parallel line of research focuses on interacting infectious diseases and the co-infection process, where individuals are infected with multiple pathogens simultaneously. Two-pathogen models capture competition for susceptible hosts, cross-immunity, in which infection with one pathogen protects against the other, and mutual reinforcement, in which each pathogen helps the other spread \cite{sahneh2014competitive,poletto2015competing,granell2013dynamical}. Related studies explore general hypergraph diffusion models adaptable to various scenarios \cite{cui2024general}, as well as multi-layer reaction-diffusion models specifically designed for multi-pathogen scenarios \cite{SchaposnikEtAl2025a,SchaposnikEtAl2025b}. Here we analyze the specific overlap rule of HyCo-SIS.

Previous literature has established that projecting a hypergraph onto an ordinary pairwise graph can result in substantial information loss. Structurally distinct hypergraphs may yield identical adjacency matrices after projection while supporting markedly different system dynamics \cite{salova2021projection}. Structural features specific to higher-order systems, including overlap among groups, can likewise affect contagion outcomes in ways that may not be retained under pairwise representations \cite{malizia2025overlap}. At the same time, appropriately calibrated pairwise epidemic models can approximately reproduce the aggregate trajectories of higher-order models \cite{tan2026replicating}. Thus, the local counterexample presented in this study is not intended as a novel demonstration that projection can lose information, but rather to isolate the specific mechanism by which information is lost under the transmission rules considered here. The central numerical question is instead whether a finely calibrated pairwise projection can generalize beyond the observable used for calibration and accurately reproduce other outcomes of the higher-order system.

Related work has also considered directed contagion processes \cite{directedcontagion2024}, while symmetrization procedures designed to preserve group coupling strength have long been used as controls in studies of network synchronization \cite{directedsync2022}. Against this broader background, our analysis focuses specifically on the coefficient values induced by the active-triangle pair rule and on how projection redistributes these coefficients throughout the system. The resulting structural constraints are closely related to established concepts in hypergraph combinatorics \cite{beaudou2013lines}. Together, these observations motivate our comparison between the original higher-order model and calibrated pairwise projections: rather than asking whether projection can reproduce a single aggregate trajectory, we ask which higher-order observables remain identifiable after calibration and which are irretrievably altered by the projection.

We address three related questions. First, how do reinforcement
strength and the allocation of overlap coefficients among source
groups affect finite-time co-infection on static hypergraphs?
Second, does a two-pathogen pairwise model calibrated to endpoint
union prevalence reproduce co-infection, outbreak frequency,
and temporal dynamics on held-out graphs? Third, does a weighted
pairwise projection determine the higher-order transition
probabilities from a fixed labeled carrier configuration?

\section{HyCo-SIS: Model set up}
\label{sec:model}
This section formally defines the hypergraph contagion model used in our numerical experiments. We first introduce the single-pathogen SIS-type baseline model on hypergraphs, then extend it to a two-pathogen co-infection model that includes synergy or competition parameters. Finally, we define three pairwise projections of the hypergraph that serve as baselines for comparison.

\subsection{Hypergraph framework}

Let $
H=(V,E)
$
be a finite hypergraph, where
\[
V:=\{1,2,\ldots,n\}
\]
is the set of nodes and \(E\subseteq 2^V\setminus\{\emptyset\}\) is the set of hyperedges. A hyperedge \(e\in E\) represents a group interaction among all nodes in \(e\). For example, a hyperedge can correspond to all members of a student group, where students who join the same group are enclosed by the same hyperedge. In this study, a hyperedge of size 2 corresponds to a one-on-one contact, that is, an ordinary interaction between two people; a hyperedge of size 3 or larger corresponds to a higher-order interaction, that is, a group contact involving more than two people.

For any node $i\in V$, denote $E_i:=\{e\in E: i\in e\}$ as the set of all hyperedges that contain node $i$. Time evolves in discrete steps $t\in\{0,1,2,\ldots\}$.
Let $X_i(t)\in\{0,1\}$ indicate whether node $i$ carries
the pathogen at step $t$, with $1$ denoting carriage
and $0$ denoting susceptibility. Write
$X(t):=(X_1(t),\ldots,X_n(t))$.
The number of carriers in hyperedge $e$, excluding
target $i$, is defined by
\[
I_e^{(-i)}(t):=\sum_{j\in e\setminus\{i\}}X_j(t),
\] the value of which represents the number of already infected nodes among all remaining members of the hyperedge $e$ that contains node $i$, excluding node $i$ itself.

\subsection{Single-pathogen hypergraph contagion}

Let us first consider the single-pathogen baseline without overlap reinforcement ($\gamma=0$), following the classical SIS framework on networks \cite{pastorsatorras2015epidemicreview,barrat2008dynamical} and its extensions to higher-order interaction structures including simplicial complexes and hypergraphs \cite{iacopini2019simplicial,cui2024general}.

Using the carrier indicators introduced above, an infected
node recovers independently with probability $\mu$ at
each step.  
Thus, $\mu\in[0,1]$ is interpreted as the per-step recovery probability.

Susceptible nodes receive infection pressure through pairwise contacts
and higher-order group exposure. In the numerical experiments, $A$ is
the binary adjacency matrix of all distinct pairwise contacts, including
every triangle side and every additional pairwise contact. Thus
$A_{ij}=1$ when distinct nodes $i$ and $j$ share a pairwise contact,
and $A_{ij}=0$ otherwise, with $A_{ii}=0$. Repeated membership of a
pair in different triangles does not increase $A_{ij}$.  The pairwise infection pressure on node $i$ is
\[
\beta_1\sum_{j=1}^n A_{ij}X_j(t),
\]
where $\beta_1$ controls the strength of pairwise transmission.

To characterize higher-order exposure, we assume that a hyperedge contributes infection pressure only when a sufficient number of other nodes in that hyperedge are infected. For an integer threshold $r\geq2$, define the activation
indicator below, where $\mathbf{1}(\cdot)$ equals one
when its condition holds and zero otherwise:
\[
\phi_{e}^{(r)}(i,t)
:=
\mathbf{1}\!\left(I_e^{(-i)}(t)\geq r\right).
\]
Then the higher-order infection pressure on node $i$ is
\[
\beta_h
\sum_{\substack{e\in E_i\\ |e|\geq 3}}
\phi_{e}^{(r)}(i,t),
\]
where $\beta_h$ controls the strength of higher-order transmission.

Combining these two parts together, the total infection pressure that a susceptible node $i$ experiences at time $t$ is
\[
\lambda_i(t)
:=
\beta_1\sum_{j=1}^n A_{ij}X_j(t)
+
\beta_h
\sum_{\substack{e\in E_i\\ |e|\geq 3}}
\mathbf{1}\!\left(I_e^{(-i)}(t)\geq r\right).
\]

In all triangle-based simulations, we set $r=2$, meaning that only when both other nodes in the triple are infected does the triple generate higher-order infection pressure for the remaining node.

The quantity $\lambda_i(t)$ represents the total infection pressure experienced by node $i$ during time step $t$. The probability that a susceptible node is infected in this time step is

\[
p_i(t)
:=
1-\exp\!\left[-\lambda_i(t)\right].
\]
This maps any nonnegative pressure to a probability in $[0,1)$.

We distinguish throughout between the infection pressure
\(\lambda_i(t)\), which is nonnegative and need not be bounded by one, and
the corresponding per-step infection probability \(p_i(t)\in[0,1]\).
All infection pressures at time $t$ are computed from the configuration of the system at the beginning of that time step.

\subsection{Two-pathogen co-infection model}
\label{subsec:two-pathogen}

We next extend the single-pathogen process to a two-pathogen discrete-time
co-infection model, following the general framework of interacting and
co-circulating spreading processes
\cite{sahneh2014competitive,poletto2015competing,granell2013dynamical,SchaposnikEtAl2023SciRep}.

Any node \(i\in V\) occupies one of four states, $S, I_1, I_2, I_{12},$ where \(S\) denotes susceptibility to both pathogens, \(I_1\) denotes
infection with pathogen 1 only, \(I_2\) denotes infection with pathogen 2
only, and \(I_{12}\) means the node is infected with both pathogens.

For pathogen \(a\in\{1,2\}\), let $\lambda_i^{(a)}(t)$ denote the infection pressure, or hazard, experienced by node \(i\) from
pathogen \(a\). The corresponding infection probability is
\[
p_i^{(a)}(t)
:=
1-\exp\!\left[-\lambda_i^{(a)}(t)\right].
\]

If a node is already infected with one of the two pathogens, the probability that it becomes infected with the remaining pathogen (the secondary infection probability) is adjusted by an interaction parameter $\alpha>0$.  Thus, for a node in
state \(I_1\), the probability of acquiring pathogen 2 during the time step is
\[
\widetilde p_i^{(2)}(t)
:=
1-\exp\!\left[-\alpha\lambda_i^{(2)}(t)\right],
\]
Conversely, for a node in state \(I_2\), the probability of acquiring pathogen 1 is
\[
\widetilde p_i^{(1)}(t)
:=
1-\exp\!\left[-\alpha\lambda_i^{(1)}(t)\right].
\]

Recovery from pathogen \(a\) occurs independently with probability
\(\mu_a\) per time step.

All infection pressures and transition probabilities at time \(t\) are
computed from the configuration at the beginning of the time step.
Infection and recovery events are then sampled, and all node states are
updated simultaneously to produce the configuration at time \(t+1\).
In particular, if a susceptible node acquires both pathogens in the same
time step, it enters state \(I_{12}\).

Conditional on the configuration at time \(t\), the infection-acquisition
and recovery events appearing in Table \ref{tab:transition-rules} are
sampled independently. Infection-acquisition events are sampled only for
pathogens absent from a node at time \(t\), while recovery events are
sampled only for pathogens present at time \(t\). As a result, if a node clears a certain pathogen during this round of synchronous updates, it cannot become re-infected with that same pathogen in the same round of updates, and must wait until the next time step to have the opportunity to be infected again.
The complete synchronous transition rule is given in
Table \ref{tab:transition-rules}.

\begin{table}[!t]
\centering
\small
\begin{tabular}{c|l|c}
\hline
State at \(t\) & Events during the step & At \(t+1\) \\
\hline

\(S\)
& neither infection occurs
& \(S\) \\

\(S\)
& pathogen 1 only
& \(I_1\) \\

\(S\)
& pathogen 2 only
& \(I_2\) \\

\(S\)
& both infections occur
& \(I_{12}\) \\
\hline

\(I_1\)
& no pathogen-2 infection, no recovery
& \(I_1\) \\

\(I_1\)
& no pathogen-2 infection, recovery from 1
& \(S\) \\

\(I_1\)
& pathogen-2 infection, no recovery from 1
& \(I_{12}\) \\

\(I_1\)
& pathogen-2 infection and recovery from 1
& \(I_2\) \\
\hline

\(I_2\)
& no pathogen-1 infection, no recovery
& \(I_2\) \\

\(I_2\)
& no pathogen-1 infection, recovery from 2
& \(S\) \\

\(I_2\)
& pathogen-1 infection, no recovery from 2
& \(I_{12}\) \\

\(I_2\)
& pathogen-1 infection and recovery from 2
& \(I_1\) \\
\hline

\(I_{12}\)
& neither pathogen clears
& \(I_{12}\) \\

\(I_{12}\)
& pathogen 1 clears only
& \(I_2\) \\

\(I_{12}\)
& pathogen 2 clears only
& \(I_1\) \\

\(I_{12}\)
& both pathogens clear
& \(S\) \\
\hline
\end{tabular}
\caption{Synchronous transition rules for the two-pathogen HyCo-SIS model.
All probabilities are evaluated from the configuration at the beginning of
time step \(t\).}
\label{tab:transition-rules}
\end{table}

\subsection{Synergy and competition}

The interaction between the two pathogens is controlled by a fixed parameter $\alpha>0$.

The parameter \(\alpha\) alters only the scenario of secondary infection (infection
by one pathogen of a node already carrying the other pathogen), as defined in Section \ref{subsec:two-pathogen}. The three interaction regimes are therefore
\[
0<\alpha<1
\qquad\text{competition or partial cross-protection},
\]
\[
\alpha=1
\qquad\text{no cross-susceptibility modification},
\]
and
\[
\alpha>1
\qquad\text{synergy or co-facilitation}.
\]

Hence, increasing \(\alpha\) raises the probability of a node developing a secondary infection
without directly altering the initial infection pressure faced by a
node susceptible to both pathogens.

When studying the coexistence of two pathogens, the main observable is the proportion of hosts infected with both pathogens, 
\[
\rho_{12}(t)
:=
\frac{1}{n}
\#\{i\in V:i\text{ is in state }I_{12}\text{ at time }t\}.
\]
Let $I_1(t)$, $I_2(t)$, and $I_{12}(t)$ denote the
numbers of nodes in the corresponding states at step $t$.
We define the total infected fraction by
\[
\rho_{\mathrm{union}}(t)
:=\frac{I_1(t)+I_2(t)+I_{12}(t)}{n},
\]
so that each infected node is counted once.
The experiments of Section \ref{sec:numerics} use $\alpha=1$, the case in which the pathogens do not interact, and the allocation experiments of Section \ref{sec:allocation} also use $\alpha=2$. In both we ask how higher-order structure and the overlap between hyperedges affect $\rho_{12}(t)$.

\subsection{Overlap-reinforced higher-order co-infection}
\label{subsec:overlap-model}
For pathogen \(a\in\{1,2\}\), define the indicator
\[
X_j^{(a)}(t)
:=
\begin{cases}
1, & \text{if \(j\) carries \(a\) at step \(t\)},\\
0, & \text{otherwise}.
\end{cases}
\]

Let $N_i^{(a)}(t)
:=
\sum_{j=1}^{n}A_{ij}X_j^{(a)}(t)$ denote the number of pairwise neighbors of node \(i\) carrying pathogen \(a\).

For a hyperedge \(e\in E_i\), define
\[
I_{e,a}^{(-i)}(t)
:=
\sum_{j\in e\setminus\{i\}}
X_j^{(a)}(t).
\]
For a threshold \(r\geq2\), define the activation indicator
\[
\phi_{e,a}^{(r)}(i,t)
:=
\mathbf{1}
\left(
I_{e,a}^{(-i)}(t)\geq r
\right).
\]

The number of infectious higher-order hyperedges containing node \(i\) is
\[
R_i^{(a)}(t)
:=
\sum_{\substack{e\in E_i\\|e|\geq3}}
\phi_{e,a}^{(r)}(i,t).
\]

To quantify overlap among infectious higher-order hyperedges, fix any ordering of $E$, so that $e<f$ below counts each unordered pair once, and define
\[
O_i^{(a)}(t)
:=
\sum_{\substack{e,f\in E_i\\e<f\\|e|,|f|\geq3}}
\left|
(e\setminus\{i\})
\cap
(f\setminus\{i\})
\right|
\phi_{e,a}^{(r)}(i,t)\phi_{f,a}^{(r)}(i,t).
\]

To calculate the overlap statistic, we remove node \(i\) from all intersection sets to prevent the trivial common
membership of \(i\) from contributing to the overlap statistic.  Thus
\(O_i^{(a)}(t)\) only counts overlap among the other members of simultaneously
infectious groups containing \(i\).

The total infection pressure by pathogen \(a\) on node \(i\) is
\[
\begin{aligned}\lambda_i^{(a)}(t):={}&\beta_a^{(1)}N_i^{(a)}(t)+\beta_a^{(h)}R_i^{(a)}(t)\\ &+\gamma_aO_i^{(a)}(t).\end{aligned}
\]

The three coefficients in this formula regulate the transmission intensity in different scenarios respectively: \(\beta_a^{(1)}\) controls pairwise transmission,
\(\beta_a^{(h)}\) controls higher-order transmission, and
\(\gamma_a\geq0\) controls reinforcement due to overlapping infectious
hyperedges.

The three channels contribute additively to the infection pressure.
Their exposure statistics may be correlated because they are evaluated
on the same carrier configuration. For example, if
\(e=\{i,j,k\}\) is a triangle and both \(j\) and \(k\) carry pathogen \(a\),
then node \(i\) receives the two corresponding pairwise contributions through
\(N_i^{(a)}\) as well as the additional higher-order contribution through
\(R_i^{(a)}\). This additive structure is deliberate: pairwise contact and
simultaneous group exposure are distinct transmission mechanisms.

Substituting the infection pressure above into the primary and secondary
infection probabilities defined in Section \ref{subsec:two-pathogen} gives
\[
p_i^{(a)}(t)
=
1-\exp\!\left[-\lambda_i^{(a)}(t)\right],
\]
\[
\widetilde p_i^{(a)}(t)
=
1-\exp\!\left[-\alpha\lambda_i^{(a)}(t)\right].
\]

We refer to this process as
\emph{HyCo-SIS}, the overlap-reinforced higher-order co-infection SIS model.

Since all our experiments use the same parameters for both pathogens, we write
\[
\beta_1^{(1)}=\beta_2^{(1)}=\beta_1,\qquad
\beta_1^{(h)}=\beta_2^{(h)}=\beta_h,
\]
\[
\gamma_1=\gamma_2=\gamma,
\qquad
\mu_1=\mu_2=\mu.
\]
The two pathogens therefore differ only in their initial carriers.

\subsection{Graph projections used for comparison}
The projections are built from the binary contact matrix
$A$ defined above and the triangle collection
\[
E_\Delta:=\{e\in E:|e|=3\}.
\]
For distinct vertices $i,j$, define
\[
M_{ij}:=\#\{e\in E_\Delta:\{i,j\}\subseteq e\},
\qquad M_{ii}:=0.
\]
Thus $M_{ij}$ counts the distinct triangles containing
both vertices. The evaluated projections are
\[
C:=A,\qquad W:=A+M,\]
\[
Q_{ij}:=\mathbf 1\{W_{ij}\geq2\},\quad i\ne j,
\]
with all diagonal entries set to $0$, so that no node interacts with itself. Thus $W$ counts both the size-two contact and the triangle memberships. A contact outside every triangle has weight one. Each projected model uses only pairwise infection pressure $\widehat\beta\sum_j P_{ij}X_j^{(a)}$, where $P$ is its matrix.  

All projected models use the same recovery rule and the same synchronous update as HyCo-SIS; they differ from it only through the matrix $P$ and the coefficient $\widehat\beta$, which is fitted separately for each projection (Section \ref{subsec:projections}).

\begin{algorithm}[H]
\caption{HyCo-SIS: Overlap-Reinforced Higher-Order Co-Infection SIS}
\begin{algorithmic}[1]

\State \textbf{Input:}
hypergraph \(H=(V,E)\);
parameters
\(\beta_a^{(1)},\beta_a^{(h)},\gamma_a,\mu_a\) for
\(a\in\{1,2\}\);
interaction parameter \(\alpha\);
activation threshold \(r\);
initial node states;
time horizon \(T\)

\State Initialize every node in one of
\(S,I_1,I_2,I_{12}\)

\For{\(t=0,1,\ldots,T-1\)}

    \State Store the complete configuration at time \(t\)

    \For{each node \(i\in V\)}

        \For{each pathogen \(a\in\{1,2\}\)}

            \State Compute the pairwise exposure
            \(N_i^{(a)}(t)\)

            \State Compute the number of activated higher-order
            hyperedges \(R_i^{(a)}(t)\)

            \State Compute the infectious overlap
            \(O_i^{(a)}(t)\)

            \State Compute $\lambda_i^{(a)}(t)$ from $N_i^{(a)},R_i^{(a)},O_i^{(a)}$.

            \State Compute the primary infection probability
\[
p_i^{(a)}(t)
=
1-\exp\!\left[-\lambda_i^{(a)}(t)\right]
\]

\State Compute the secondary infection probability
\[
\widetilde p_i^{(a)}(t)
=
1-\exp\!\left[-\alpha\lambda_i^{(a)}(t)\right]
\]
            \State For each pathogen absent from node \(i\) at time \(t\), use
\(p_i^{(a)}(t)\) if node \(i\) is in state \(S\), and use
\(\widetilde p_i^{(a)}(t)\) if node \(i\) carries the other pathogen

\State For pathogen \(a\) present in node \(i\) at time \(t\), set
the recovery probability to \(\mu_a\)

        \EndFor

        \State Sample each applicable infection-acquisition and recovery
        event exactly once, independently conditional on the configuration
        stored at time \(t\)
        
        \State Determine the new state of node \(i\) according to
        Table \ref{tab:transition-rules}

    \EndFor

    \State Apply all node-state changes simultaneously

    \State Record
    \(S(t+1),I_1(t+1),I_2(t+1),I_{12}(t+1)\)

\EndFor

\State \textbf{Output:}
state history and epidemic observables

\end{algorithmic}
\end{algorithm}

\section{Analytical Results}
\label{sec:analytic}
In this section we collect two analytical observations about HyCo-SIS. The first concerns the disease-free state: under a mean-hazard closure we linearize the dynamics and show that only the pairwise channel survives at first order, so the higher-order mechanisms can only act once enough carriers are present. The second concerns graph projections: we give an explicit pair of hypergraphs with the same weighted pairwise projection but different transition probabilities, which isolates the information that a projection discards.

\subsection{Disease-free linearization under a mean-hazard closure}
\label{subsec:linearization}
Let $x_i(t)$ denote the infection probability of node $i$
at step $t$ in the single-pathogen restriction, and write
$x(t):=(x_1(t),\ldots,x_n(t))$. We explicitly use a mean-hazard closure to obtain an update formula for change in infection probability,
\[
\begin{aligned}x_i(t+1)={}&(1-\mu)x_i(t)\\&+(1-x_i(t))\{1-e^{-h_i(x(t))}\}.\end{aligned}
\]

The first term accounts for infected nodes that do not recover; the
second accounts for acquisition by susceptible nodes. We write
\[
h_i(x):=\beta_1\sum_j A_{ij}x_j+H_i(x).
\]
To construct $H_i$, we first simplify products of binary carrier
indicators using $X_j^2=X_j$, and then replace each product over
distinct vertices by the product of their infection probabilities.
For distinct triangle hyperedges with $r=2$, an active triangle
contributes a quadratic term. A pair of triangles sharing the target
and one source contributes a cubic term, because
$(X_jX_k)(X_jX_l)=X_jX_kX_l$. Thus
$H_i(x)=O(\|x\|^2)$ near the disease-free state,
where $\|\cdot\|$ denotes the Euclidean norm.
We write $I$ for the identity matrix and $\rho(A)$
for the spectral radius of $A$.

\begin{proposition}\label{prop:dfe}
Assume $\beta_1,\beta_h,\gamma\geq0$, $0\leq\mu\leq1$, and that
$A$ is symmetric and nonnegative. For the mean-hazard closure above,
the Jacobian at the disease-free fixed point is
\[
J=(1-\mu)I+\beta_1A,
\]
with spectral radius
\[
\rho(J)=1-\mu+\beta_1\rho(A).
\]
The disease-free fixed point is locally asymptotically stable if
$\beta_1\rho(A)<\mu$ and unstable if $\beta_1\rho(A)>\mu$.
The case $\beta_1\rho(A)=\mu$ is not decided by the linearization.
\end{proposition}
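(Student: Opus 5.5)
The plan is to differentiate the map $F_i(x):=(1-\mu)x_i+(1-x_i)\{1-e^{-h_i(x)}\}$ at $x=0$, where $x=0$ is a fixed point because $h_i(0)=0$ gives $F_i(0)=0$. First I will record that $H_i$ is a polynomial in $x$ without constant or linear terms. By the construction preceding the proposition, each active-triangle term contributes $\beta_h x_jx_k$ with $j\neq k$. Each overlapping pair contributes $\gamma$ times a product of three distinct probabilities, or more generally a product of at least two. Hence $H_i(0)=0$ and $\partial H_i/\partial x_j(0)=0$ for all $j$, so $h_i(x)=\beta_1\sum_jA_{ij}x_j+O(\|x\|^2)$.

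Next I compute the partial derivatives. By the product rule,
\[
\frac{\partial F_i}{\partial x_j}(0)=(1-\mu)\delta_{ij}-\delta_{ij}\{1-e^{-h_i(0)}\}+(1-0)\,e^{-h_i(0)}\frac{\partial h_i}{\partial x_j}(0).
\]
Since $h_i(0)=0$, the middle term vanishes, $e^{0}=1$, and $\partial h_i/\partial x_j(0)=\beta_1A_{ij}$. This gives $J=(1-\mu)I+\beta_1A$.

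For the spectral radius, I use that $A$ is real symmetric and nonnegative. By Perron--Frobenius, $\rho(A)$ is itself an eigenvalue, and every eigenvalue $\lambda$ of $A$ is real with $|\lambda|\le\rho(A)$. The eigenvalues of $J$ are $1-\mu+\beta_1\lambda$. Because $1-\mu\ge0$ and $\beta_1\ge0$, the largest of these is $1-\mu+\beta_1\rho(A)\ge0$. For any other eigenvalue,
\[
|1-\mu+\beta_1\lambda|\le 1-\mu+\beta_1|\lambda|\le 1-\mu+\beta_1\rho(A).
\]
Hence $\rho(J)=1-\mu+\beta_1\rho(A)$.

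Stability then follows from the discrete-time linearization theorem for a $C^1$ map near a fixed point. If $\rho(J)<1$, which is equivalent to $\beta_1\rho(A)<\mu$, the fixed point is locally asymptotically stable. If $\rho(J)>1$, it is unstable. For instability I would invoke the standard result that a $C^1$ map with an eigenvalue of modulus greater than one at a fixed point has an unstable fixed point. The one subtlety is that the state space is $[0,1]^n$ rather than all of $\mathbb{R}^n$. Here the Perron eigenvector of $A$ is nonnegative, so the unstable direction points into the admissible orthant, and instability holds within the physical domain as well. When $\rho(J)=1$ the linear part is neutral, and the outcome depends on the higher-order terms, so the linearization decides nothing.

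The main obstacle is the bookkeeping that $H_i$ has no linear part. Each product must involve at least two distinct sources: a triangle $\{i,j,k\}$ has $j\neq k$, and the overlap products reduce via $X_j^2=X_j$ to products over at least two distinct vertices. This ensures that no reduction to a single-variable term occurs. Everything after that is routine.
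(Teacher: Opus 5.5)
Your proposal is correct and follows essentially the same route as the paper: $h_i(0)=0$ and $DH_i(0)=0$ give $J=(1-\mu)I+\beta_1A$, then the Perron eigenvalue of $A$ together with the bound $|1-\mu+\beta_1\lambda|\le 1-\mu+\beta_1\rho(A)$ gives $\rho(J)$, and stability is read off by comparing $\rho(J)$ with $1$. Your remark on instability within $[0,1]^n$ goes beyond the paper, which leaves the domain implicit, but as written it is only a sketch; to make it rigorous, track $\sum_i v_i x_i(t)$ for a Perron vector $v$ that is strictly positive on a connected component attaining $\rho(A)$.
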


\begin{proof}
Since $h_i(0)=0$ and $DH_i(0)=0$, differentiation of the closure
update at zero gives $J=(1-\mu)I+\beta_1A$.
The matrix $J$ is nonnegative. If $v$ is a nonnegative eigenvector
of $A$ associated with $\rho(A)$, then
$Jv=(1-\mu+\beta_1\rho(A))v$. Moreover, every eigenvalue
$\lambda$ of $A$ satisfies
\[
|1-\mu+\beta_1\lambda|
\leq 1-\mu+\beta_1\rho(A).
\]
Hence $\rho(J)=1-\mu+\beta_1\rho(A)$. The strict stability and
instability conclusions follow from whether this spectral radius
is below or above one.
\end{proof}

This conclusion is specific to the mean-hazard closure, and it is not an exact threshold theorem for the finite stochastic process.
 Other closures give different linear coefficients. For instance, if the neighbors' states are treated as independent Bernoulli variables and only pairwise transmission is kept, the infection probability becomes
\[
1-\prod_j\left[1-x_j(1-e^{-\beta_1 A_{ij}})\right].
\]
For binary $A$, its linear coefficient is $(1-e^{-\beta_1})A$, rather than $\beta_1A$. These two coefficients are only consistent under the first-order approximation of the single-contact hazard. Both independent-state closures place activation requiring at least two infectious sources beyond first order in prevalence, but neither establishes bistability or an exact finite-network epidemic threshold.

\subsection{Pairwise projections do not identify higher-order dynamics}
\label{subsec:nonidentifiability}
Pairwise projections do not in general determine the higher-order contagion
dynamics, even when pairwise multiplicities are retained. The following
example gives an explicit construction.

\begin{proposition}
There exist two hypergraphs with identical weighted pairwise projections
that produce different one-step HyCo-SIS transition probabilities from the
same labeled initial condition.
\end{proposition}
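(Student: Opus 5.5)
We prove the statement by exhibiting an explicit pair of triangle hypergraphs on the same six vertices. They have the same pairwise contact matrix $A$ and the same triangle-multiplicity matrix $M$, hence the same $W=A+M$. We then pick a carrier configuration in which a target sees an active triangle in one hypergraph but not in the other.

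\emph{Construction.} Take the two alternating face classes of an octahedron on $V=\{1,\dots,6\}$:
\[
E^{(1)}_\Delta=\{\{1,2,3\},\{1,4,5\},\{2,4,6\},\{3,5,6\}\},
\]
\[
E^{(2)}_\Delta=\{\{1,2,4\},\{1,3,5\},\{2,3,6\},\{4,5,6\}\}.
\]
Let $E^{(1)}$ and $E^{(2)}$ consist of these triangles together with their sides as size-two contacts, and nothing else. Listing the sides shows that both triangle families cover exactly the same twelve pairs:
\[
12,13,14,15,23,24,26,35,36,45,46,56.
\]
Each of these pairs lies in exactly one triangle in either family. Consequently $A^{(1)}=A^{(2)}$ is the indicator of these twelve pairs, and $M^{(1)}=M^{(2)}=A^{(1)}$. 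So $W^{(1)}=W^{(2)}=2A^{(1)}$, and the derived projections $C$ and $Q$ also coincide.

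\emph{Transition probabilities.} Fix the labeled configuration in which nodes $2$ and $3$ carry pathogen $1$ and all other nodes are susceptible to both pathogens. Take target node $1$ and $r=2$.
\begin{itemize}
\item The pairwise exposure is $N_1^{(1)}=A_{12}+A_{13}=2$ in both hypergraphs.
\item In $E^{(1)}$ the triangle $\{1,2,3\}$ is active, so $R_1^{(1)}=1$.
\item In $E^{(2)}$ the triangles containing $1$ are $\{1,2,4\}$ and $\{1,3,5\}$, and neither has both other members infected, so $R_1^{(1)}=0$.
\item In both cases at most one triangle is active, so $O_1^{(1)}=0$.
\end{itemize}
Hence the pressures are $\lambda^{(1)}=2\beta_1+\beta_h$ and $\lambda^{(2)}=2\beta_1$. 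Whenever $\beta_h>0$, the strict monotonicity of $p=1-e^{-\lambda}$ gives different acquisition probabilities for node $1$. Since node $1$ has no recovery event and pathogen $2$ is absent everywhere, the one-step transition probability $S\to I_1$ for node $1$ differs, and so does the full one-step kernel. The same holds at $\alpha\neq1$ for secondary infections, and with $\gamma>0$.

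\emph{Main obstacle.} The only delicate step is producing two distinct triangle families with identical pair multiplicities, including the binary $A$ part, so that $W$ really coincides. The octahedral two-colouring handles this, and the verification is just the pair listing above. Everything else is a direct evaluation of $N$, $R$ and $O$. We will note that the example needs only $\beta_h>0$, and that a triangle-based activation is precisely the information discarded by the projection.
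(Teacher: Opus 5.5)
Your proof is correct and is essentially the paper's argument: you use the same pair of alternating octahedral face classes $H_A,H_B$, which have identical pair multiplicities, and you exhibit a target whose triangle channel is active in one hypergraph and inactive in the other. The differences are cosmetic: the paper infects $\{1,2,3\}$ and compares targets $4,5,6$ with $\beta_1=\gamma=0$, while you infect $\{2,3\}$, compare target $1$, and keep $\beta_1,\gamma$ arbitrary, which is valid since $A$ coincides and no two triangles share an edge.
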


\begin{proof}
Let
\[
V:=\{1,2,3,4,5,6\}
\]
and consider the two triangle hypergraphs
\[
H_A
:=
\{
\{1,2,3\},
\{1,4,5\},
\{2,4,6\},
\{3,5,6\}
\}
\]
and
\[
H_B
:=
\{
\{1,2,4\},
\{1,3,5\},
\{2,3,6\},
\{4,5,6\}
\}.
\]

For every pair of vertices \(\{i,j\}\), the number of hyperedges containing
both \(i\) and \(j\) is the same in \(H_A\) and \(H_B\). Hence the two
hypergraphs have identical weighted pairwise projections.

Now take
\[
r=2,\qquad
\beta_1=0,\qquad
\gamma=0,\qquad \beta_h>0,
\]
and suppose that initially vertices
\[
\{1,2,3\}
\]
are infected while vertices \(4,5,6\) are susceptible.

In \(H_A\), none of the susceptible vertices \(4,5,6\) belongs to a triangle
containing two infected vertices. Therefore
\[
\lambda_4=\lambda_5=\lambda_6=0,
\]
and all three remain uninfected through the higher-order channel with
probability one during that step.

In \(H_B\), vertex \(4\) belongs to the triangle \(\{1,2,4\}\), vertex \(5\)
belongs to \(\{1,3,5\}\), and vertex \(6\) belongs to \(\{2,3,6\}\). Each
susceptible vertex therefore belongs to an activated triangle containing two
infected vertices. Consequently,
\[
\lambda_4=\lambda_5=\lambda_6=\beta_h,
\]
so each is infected through the higher-order channel with probability
\[
1-\exp(-\beta_h).
\]

Thus the two hypergraphs have the same weighted pairwise projection and the
same labeled initial condition but different HyCo-SIS transition
probabilities.
\end{proof}

The proposition shows that even a weighted pairwise projection need not determine higher-order transition probabilities from a fixed labeled initial condition. The missing information is how pairwise co-occurrences are grouped into higher-order interactions.

This construction gives a fixed-label illustration of the projection ambiguity studied by Salova and D'Souza in the context of cluster synchronization \cite{salova2021projection}. The two hypergraphs are isomorphic: interchanging vertices $3$ and $4$ maps $H_A$ to $H_B$. Under uniform parameters and a relabeling-invariant initialization law, observables invariant under vertex relabeling therefore have identical distributions on the two hypergraphs. The proposition instead compares the same fixed labeled initial condition, with vertices $1,2,3$ initially infected, on both hypergraphs.

No two distinct triangles share an edge in either hypergraph, so $O_i^{(a)}(t)=0$ for every target, pathogen, and carrier configuration. The example concerns information lost about the triangle transmission channel; it does not isolate an effect of overlap reinforcement.

\section{Numerical Results}
\label{sec:numerics}
This section describes finite-time simulations and specifies which graph ensemble each one uses. Outcomes include every trial, including extinction and low prevalence. The  experiments in this section set $\alpha=1$: infection with the other pathogen does not change susceptibility to a second disease, although shared network connection topology and initial settings can still affect aggregated co-infection statistics.

\begin{table*}[!t]\centering\small
\begin{tabular}{ll}
Study & Protocol\\\hline
\parbox[t]{.21\textwidth}{\raggedright Seed dependence\strut} & \parbox[t]{.69\textwidth}{\raggedright Ten graphs, thirty trials per graph and seed fraction; $q=.005,.01,.02,.03,.04$ per pathogen; $\gamma=0$.\strut} \\[3pt]
\parbox[t]{.21\textwidth}{\raggedright Overlap reinforcement\strut} & \parbox[t]{.69\textwidth}{\raggedright Ten graphs per reuse value $0,.25,.5$; thirty trials per graph and gamma; $\gamma=0,.05,.1,.2,.4$; $q=.02$.\strut} \\[3pt]
\parbox[t]{.21\textwidth}{\raggedright Projection evaluation\strut} & \parbox[t]{.69\textwidth}{\raggedright Four training and six evaluation graphs at reuse .25, reference $\gamma=.4$; thirty matched trials per graph; rates fitted only to training endpoint union prevalence.\strut} \\[3pt]
\parbox[t]{.21\textwidth}{\raggedright Shared dynamics\strut} & \parbox[t]{.69\textwidth}{\raggedright $n=800$, 1,600 triangles; $\beta_1=.005$, $\beta_h=.1$, $\mu=.1$, $\alpha=1$, $r=2$ for HyCo-SIS. Synchronous updates; one dimensionless step; disjoint initial pathogen sets.\strut} \\[3pt]
\parbox[t]{.21\textwidth}{\raggedright Outcomes\strut} & \parbox[t]{.69\textwidth}{\raggedright Endpoint $T=200$; terminal-window states 151--200; outbreak if union prevalence $\geq.05$; extinction if no pathogen remains; other positive runs are low prevalence.\strut} \\[3pt]
\parbox[t]{.21\textwidth}{\raggedright Replication\strut} & \parbox[t]{.69\textwidth}{\raggedright Trials averaged within graph for unconditional outcomes; graph-block bootstrap intervals. Fixed graph/initialization/simulation identifiers retained; gamma comparisons use matched initial sets and simulation seeds.\strut} \\[3pt]
\end{tabular}
\caption{Simulation protocol for the included experiments. Parameters are per-step integrated hazards; $\mu$ is a recovery probability. Exact graph snapshots and per-run identifiers are available in the code repository.}\label{tab:protocol}
\end{table*}
\begin{figure*}[!t]
    \centering
    \includegraphics[width=\textwidth]{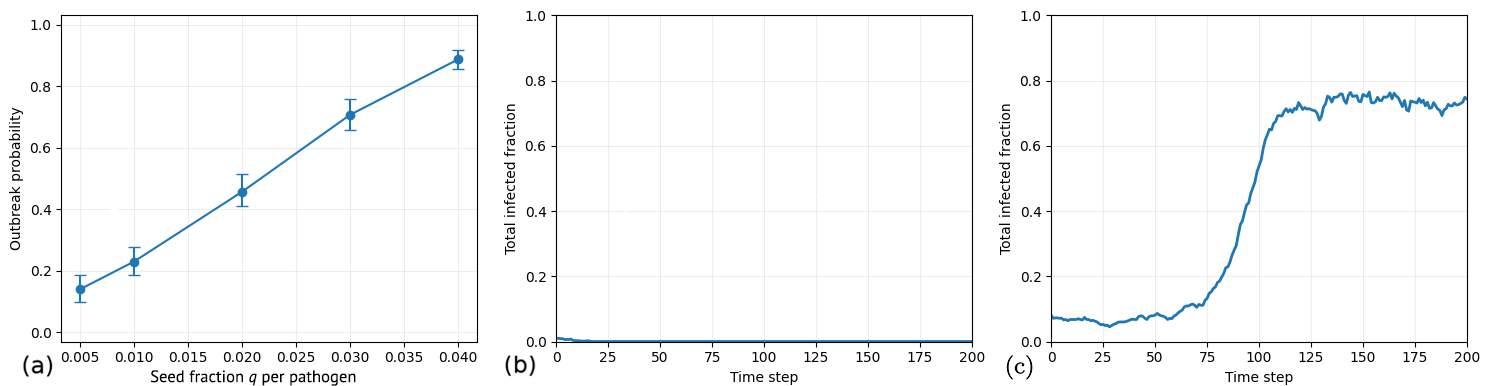}
    \caption{\textbf{Finite-seed dependence.} (a) Outbreak probability against the fraction $q$ seeded with each pathogen, averaged over $10$ hypergraphs with $30$ trials each; error bars are $95\%$ bootstrap intervals across hypergraphs. (b) A representative extinction at $q=0.005$ and (c) a representative outbreak at $q=0.04$, each the run closest to the median final prevalence of its class. Parameters as in Table \ref{tab:protocol}, with $\gamma=0$.}
    \label{fig:critical-mass}
\end{figure*}
\subsection{Synthetic hypergraph generation}

All numerical experiments are conducted on synthetic hypergraphs generated using hyperedge lists and incidence-based data structures in Python. The main synthetic family used in the experiments is a simplicial-style hypergraph generated from triangles. When a triangle \(\{a,b,c\}\) is added, the pairwise edges \(\{a,b\}\), \(\{a,c\}\), and \(\{b,c\}\) are also included. The final generated hypergraph retains both types of relationships in full: it includes pairwise contact relationships between only two nodes, as well as genuine higher-order interactions with the joint participation of multiple agents.

We chose the triangle density so that almost all vertices lie in a single connected component; sparser choices, such as $200$ triangles on $800$ vertices, fragment the hypergraph into many small components. The principal baseline used in the numerical experiments has
\[
n=800,\qquad |E_\Delta|=1600.
\]

We generated $50$ such hypergraphs independently. In each, the largest connected component contains at least $99\%$ of the vertices (on average $99.77\%$), and $8$ of the $50$ are fully connected. These $50$ hypergraphs are used only to choose the density; all dynamics are simulated on separately generated graphs, described below. All structural statistics for this set of baseline hypergraphs are summarized in Table \ref{tab:baseline-structure}.

\begin{table}[H]
\centering
\begin{tabular}{lc}
\hline
Quantity & Value \\
\hline
Number of vertices $n$ & $800$ \\
Number of triangles $|E_\Delta|$ & 1,600 \\
Mean pairwise degree $\langle k\rangle$ & $11.91$ \\
Giant-component fraction & $0.9977$ \\
Spectral radius $\rho(A)$ & $13.97$ \\
Mean triangle degree $\langle d_\Delta\rangle$ & $6.00$ \\
Mean overlap opportunity $\langle\Omega_i\rangle$ & $0.0873$ \\
Isolated-node fraction & $0.00235$ \\
\hline
\end{tabular}
\caption{Structural statistics of the principal random simplicial contact ensemble, averaged over $50$ independently generated hypergraphs.}
\label{tab:baseline-structure}
\end{table}

\paragraph{Edge-reuse construction.}
For the overlap ensemble, triangles are added until 1,600 distinct triples are obtained. If at least one triangle-side pair has already been generated, each attempt reuses such a pair with probability $p_{\rm reuse}$. A reused pair is selected uniformly from the distinct used pairs; the third vertex is sampled uniformly from the other vertices. Otherwise three distinct vertices are sampled uniformly. Duplicate triangles are rejected and a new attempt is made. After all triangle sides are collected, uniformly sampled missing pairwise contacts are added until the skeleton has 4,800 edges. The density-calibration ensemble does not impose this same edge-count constraint.  

For distinct triangles, structural overlap opportunity is
\[
\Omega_i:=\sum_j {M_{ij}\choose2},
\]
where $M_{ij}$ counts incident triangles containing pair $\{i,j\}$. This counts pairs of triangles sharing an additional source vertex at target $i$. It is an absolute count, not a normalized frequency. Matching mean degree alone does not fix this quantity or other structural properties. 

Table \ref{tab:actual-graphs} summarizes the thirty graphs used in the overlap experiments, ten for each reuse value. By construction they share the same numbers of vertices, triangles and pairwise contacts, so the mean degree is fixed at $12$, while degree heterogeneity, spectral radius and overlap opportunity all grow with $p_{\rm reuse}$. The simulation protocol for all experiments is summarized in Table \ref{tab:protocol}.

\begin{table}[h]\centering\small
\begin{tabular}{lrrrrrr}
Reuse & Graphs & Mean GCC & $\langle k\rangle$ & Degree SD & $\rho(A)$ & $\langle\Omega\rangle$\\\hline
0 & 10 & 0.9976 & 12.00 & 4.82 & 14.07 & 0.0887 \\
0.25 & 10 & 0.9984 & 12.00 & 5.05 & 14.29 & 1.2095 \\
0.5 & 10 & 0.9975 & 12.00 & 5.60 & 14.87 & 2.4502 \\
\end{tabular}
\caption{Structural statistics of the thirty overlap-experiment graphs, averaged over the ten graphs at each reuse value. Every graph has 800 vertices, 1,600 triangles, mean triangle degree six, and 4,800 distinct pairwise contacts. Mean GCC is the fraction of vertices in the largest connected component, and Degree SD is the standard deviation of pairwise degree within each graph. The projection experiment uses the reuse-0.25 graphs, split four/six.}\label{tab:actual-graphs}
\end{table}

\subsection{Finite-seed dependence}
\label{subsec:seed}
To investigate finite-seed effects on a well-connected higher-order contact structure, we performed two-pathogen HyCo-SIS simulations on hypergraphs at the baseline density,
\[
n=800,\qquad |E_\Delta|=1600.
\]

The contagion parameters were fixed at
\[
\beta_1=0.005,\qquad
\beta_h=0.10,\qquad
\mu=0.10,
\]
with
\[
\alpha=1,\qquad
\gamma=0,\qquad
r=2.
\]

For each initial seed fraction
\[
q\in\{0.005,0.01,0.02,0.03,0.04\},
\]
a disjoint fraction $q$ of the population was initially infected with pathogen 1 and another disjoint fraction $q$ with pathogen 2. The seed study uses ten independently generated hypergraphs of this size, whose exact snapshots are available in our code repository (see Code and data), with $30$ stochastic contagion trials per graph and seed fraction, for a total of $300$ trials per seed size.

An outbreak was defined as a realization whose total infected fraction at step 200 was at least $0.05$. Outbreak probabilities were first computed separately for each hypergraph and then averaged across the $10$ independent graph realizations.

Figure \ref{fig:critical-mass} shows finite-time seed dependence:
mean outbreak probability increases from approximately $0.14$ at
$q=0.005$ to $0.887$ at $q=0.04$. These outcomes are measured at
step $200$ and demonstrate sensitivity to introduction size.
The spectral statistics provide context for this dependence.
At $\beta_1=0.005$ and $\mu=0.10$, the mean spectral radius
$13.97$ of the separate structural-calibration ensemble in
Table \ref{tab:baseline-structure} gives
$\beta_1\rho(A)/\mu\approx0.699$.
The corresponding ensemble-mean ratios for the overlap graphs
in Table \ref{tab:actual-graphs} are approximately $0.704$,
$0.714$, and $0.744$ at reuse $0$, $0.25$, and $0.5$,
respectively. These means lie below the stability boundary of
Proposition \ref{prop:dfe}; they are not graphwise upper bounds
or measurements of the ten seed-study graphs.
The observed seed dependence is consistent with a role for
nonlinear group activation, which is absent from the disease-free
Jacobian. It does not by itself establish bistability, asymptotic
persistence, or the contribution of each transmission channel
to individual outbreaks.

\subsection{Calibrated pairwise projections}
\label{subsec:projections}
We compare HyCo-SIS with the clique ($C$), weighted ($W$), and
threshold ($Q$) projections on the edge-reuse ensemble with
$p_{\rm reuse}=0.25$. Each graph has $n=800$ vertices,
1,600 triangles, and 4,800 distinct pairwise contacts.
The HyCo-SIS reference uses $\beta_1=0.005$, $\beta_h=0.1$,
$\gamma=0.4$, $\mu=0.1$, $\alpha=1$, and $r=2$, with disjoint
initial fractions of $0.02$ per pathogen.

Four graphs are used
for calibration and six different graphs for evaluation, with
$30$ matched trials per graph. For each projection, the pairwise
transmission coefficient $\widehat\beta$ is selected from a
prespecified search grid to minimize the absolute discrepancy
between its mean endpoint union prevalence and the HyCo-SIS
training mean at step $200$. Co-infection is not used for fitting.
The fitted coefficients are then held fixed on the evaluation
graphs. Figure \ref{fig:calibrated-projection-evaluation}
summarizes the calibration and held-out errors, and
Fig. \ref{fig:calibrated-projection-trajectories} shows the
corresponding unconditional mean trajectories.

The fitted coefficients are $\widehat\beta_C=0.0170$,
$\widehat\beta_W=0.0085$, and $\widehat\beta_Q=0.0190$.
The HyCo-SIS training target is $69.7\%$, compared with fitted
training means of $69.9\%$, $69.4\%$, and $70.1\%$ for
$C$, $W$, and $Q$, respectively. On the six held-out graphs,
the projections give mean endpoint union prevalence between
$69.4\%$ and $69.8\%$, compared with $67.8\%$ for HyCo-SIS.
Calibration is therefore approximate on the training graphs
and does not impose agreement on unseen graphs.

Mean endpoint co-infection for HyCo-SIS on the six held-out
graphs is $19.7\%$. The mean signed errors, defined as
projection minus HyCo-SIS, are $3.17$, $3.17$, and $3.79$
percentage points for $C$, $W$, and $Q$, respectively.
Each projection overestimates endpoint co-infection on five
evaluation graphs and underestimates it on one. All three
$95\%$ intervals for the mean signed error include zero,
while the mean absolute graph-level errors are $6.68$,
$6.64$, and $7.03$ percentage points. Thus the endpoint
results show graph-level discrepancies but do not establish
a positive mean signed bias or a ranking of the projections.

For co-infection averaged over states $151$--$200$, the mean
signed errors are $5.79$ $[0.57,9.55]$, $5.84$ $[0.55,9.61]$,
and $6.39$ $[1.16,10.19]$ percentage points for $C$, $W$, and
$Q$, respectively. Brackets denote $95\%$ percentile bootstrap
intervals obtained from 20,000 resamples of the six paired
graph means. Every held-out projection trial satisfies the
$5\%$ endpoint outbreak criterion, compared with an outbreak
frequency of $85.6\%$ for HyCo-SIS. Together with the earlier
spread in Fig. \ref{fig:calibrated-projection-trajectories},
these results show that fitting endpoint union prevalence
does not ensure agreement in transient dynamics, terminal-window
co-infection, or outbreak frequency. The conclusions concern
the three fitted models on this six-graph evaluation ensemble
at a finite horizon; they do not establish equilibrium behavior
or a general impossibility of pairwise approximation.

\begin{figure*}[!t]
\centering
\includegraphics[width=.86\textwidth]{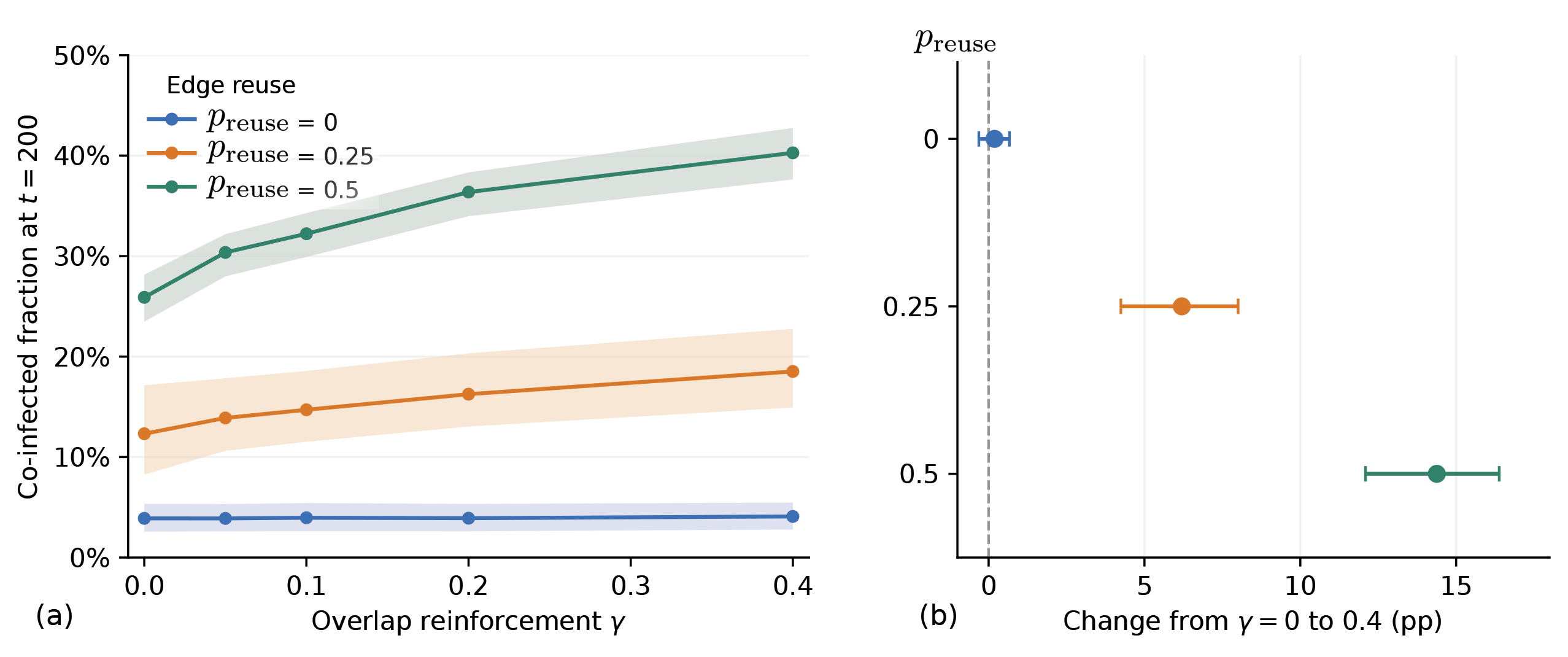}
\caption{Effect of overlap reinforcement on co-infection at
$t=200$. (a) Mean co-infected fraction versus $\gamma$.
(b) Paired changes from $\gamma=0$ to $\gamma=0.4$.
Each reuse setting includes ten graphs and thirty trials per
graph and reinforcement value, including extinct and
low-prevalence runs. Bands and intervals are pointwise $95\%$
graph-bootstrap intervals from 20,000 draws. The paired increase is positive at reuse $0.25$
and $0.5$; its interval includes zero at reuse $0$.}
\label{fig:overnight-overlap}
\end{figure*}
\begin{figure*}[!t]
  \centering
  \includegraphics[width=.86\textwidth]{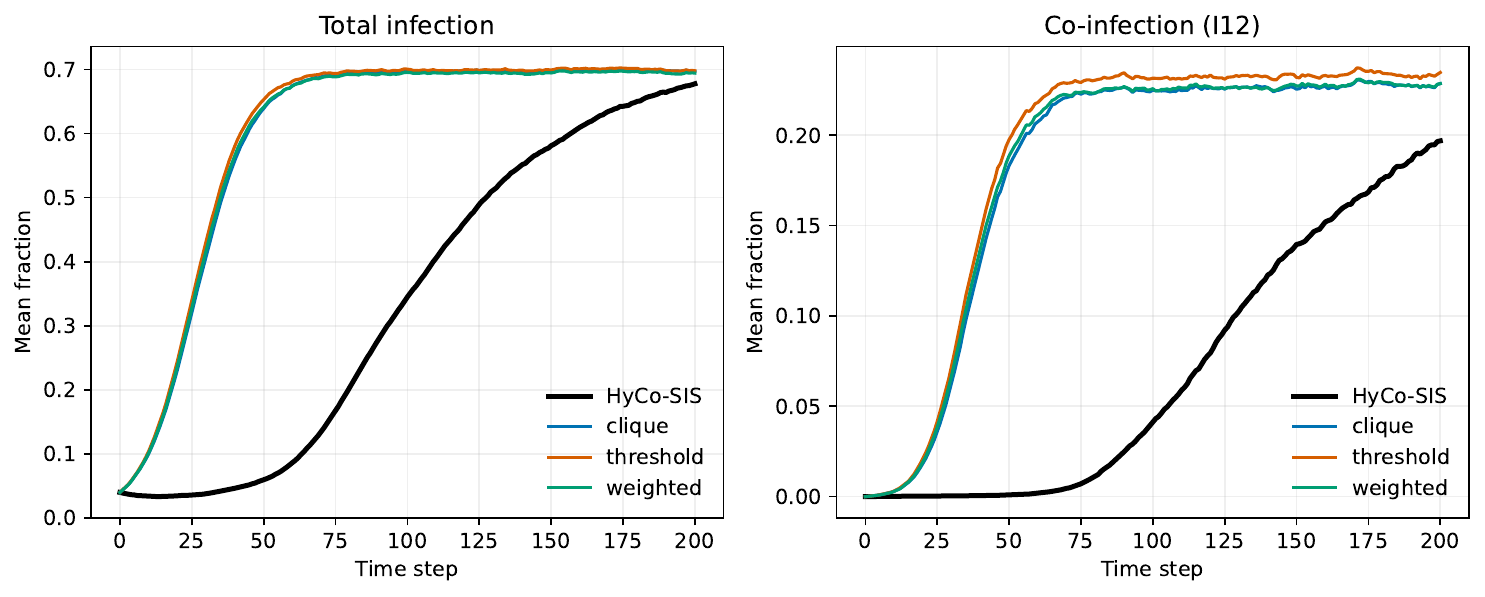}
 \caption{Mean held-out trajectories of total infection (left) and co-infection (right) for HyCo-SIS and the three fitted projections, averaged over $30$ trials on each of six held-out graphs, including extinctions and low-prevalence runs.}
  \label{fig:calibrated-projection-trajectories}
\end{figure*}
\subsection{Overlap reinforcement on the saved graph ensemble}
\label{subsec:overlap-results}
We analyzed 4,500 trajectories on the 30 graphs of Table \ref{tab:actual-graphs}: ten graphs for each edge-reuse value $p_{\rm reuse}\in\{0,0.25,0.5\}$, five reinforcement values $\gamma\in\{0,0.05,0.1,0.2,0.4\}$, and 30 trials per graph and reinforcement value. Each graph has 800 vertices, 1,600 triangles, and 4,800 distinct pairwise contacts. Parameters are $\beta_1=0.005$, $\beta_h=0.10$, $\mu=0.10$, $\alpha=1$, $r=2$, and 200 synchronous updates. Initial sets are disjoint with 2\% carriers per pathogen. All results include extinct and positive low-prevalence trials.

We averaged trials within graphs and paired graph means across
reinforcement values before obtaining $95\%$ percentile intervals
from 20,000 graph-bootstrap draws. As shown in
Fig. \ref{fig:overnight-overlap}, increasing $\gamma$ from $0$
to $0.4$ increased endpoint co-infection by $6.20$ percentage
points at reuse $0.25$ and $14.38$ points at reuse $0.5$.
At reuse $0$, the estimated $0.19$-point change was small and
its interval included zero (Table \ref{tab:overnight-paired}).
Appendix \ref{app:overlap} reports the corresponding infectious-overlap
trajectories for all targets and for targets lacking the relevant
pathogen. The paired reinforcement contrasts hold topology fixed;
cross-reuse differences also reflect changes in degree
heterogeneity and spectral radius.

\begin{table}[h]
\centering\small
\setlength{\tabcolsep}{3pt}
\begin{tabular}{rrrr}
Reuse &
\shortstack{$\bar\rho_{12}^{\gamma=0}$\\(\%)} &
\shortstack{$\bar\rho_{12}^{\gamma=0.4}$\\(\%)} &
\shortstack{Change [95\% CI]\\(pp)}\\\hline
0 & 3.90 & 4.09 & 0.19 [$-0.31$, 0.67] \\
0.25 & 12.32 & 18.52 & 6.20 [4.24, 8.01] \\
0.5 & 25.90 & 40.28 & 14.38 [12.09, 16.38] \\
\end{tabular}
\caption{Mean endpoint co-infection at $t=200$ and the paired
change from $\gamma=0$ to $\gamma=0.4$, in percentage points
(pp). Each reuse setting includes ten graphs and thirty trials
per graph and reinforcement value. Intervals are $95\%$
percentile bootstrap intervals obtained by resampling paired
graph means; all trials are included.}
\label{tab:overnight-paired}
\end{table}

\paragraph{Outcome definitions and finite-time interpretation.}
Outbreak means union prevalence at least 0.05 at $t=200$; extinction means zero carriers; the remaining positive runs are low prevalence. The full grid contains 320 positive non-outbreak trajectories. At reuse 0.5 and $\gamma=0.4$, outbreak probability is $0.9667$, indicating saturation of this binary outcome rather than complete infection. Endpoint minus mean co-infection over states 151--200 remains positive in all three reuse regimes at $\gamma=0.4$; longer horizons are needed before claiming equilibrium.

\subsection{Outbreak frequency and conditional co-infection}
\label{subsec:conditional}
To separate changes in outbreak frequency from changes among outbreak runs, we reanalyzed the saved trajectories using the endpoint rule $\mathcal{B}:=\{\rho_{\mathrm{union}}(200)\geq0.05\}$. The conditional mean is the sum of endpoint co-infection over outbreak runs divided by the number of outbreak runs. Equal trial counts give equal graph weights before conditioning, but graphs with more outbreaks contribute more observations to this conditional mean. We resample ten paired graph blocks 20,000 times, recomputing both ratios and their difference in each draw. This is a retrospective descriptive analysis of the original grid, not a new experiment.

\begin{figure}[t!]
  \centering
  \includegraphics[width=.45\textwidth]{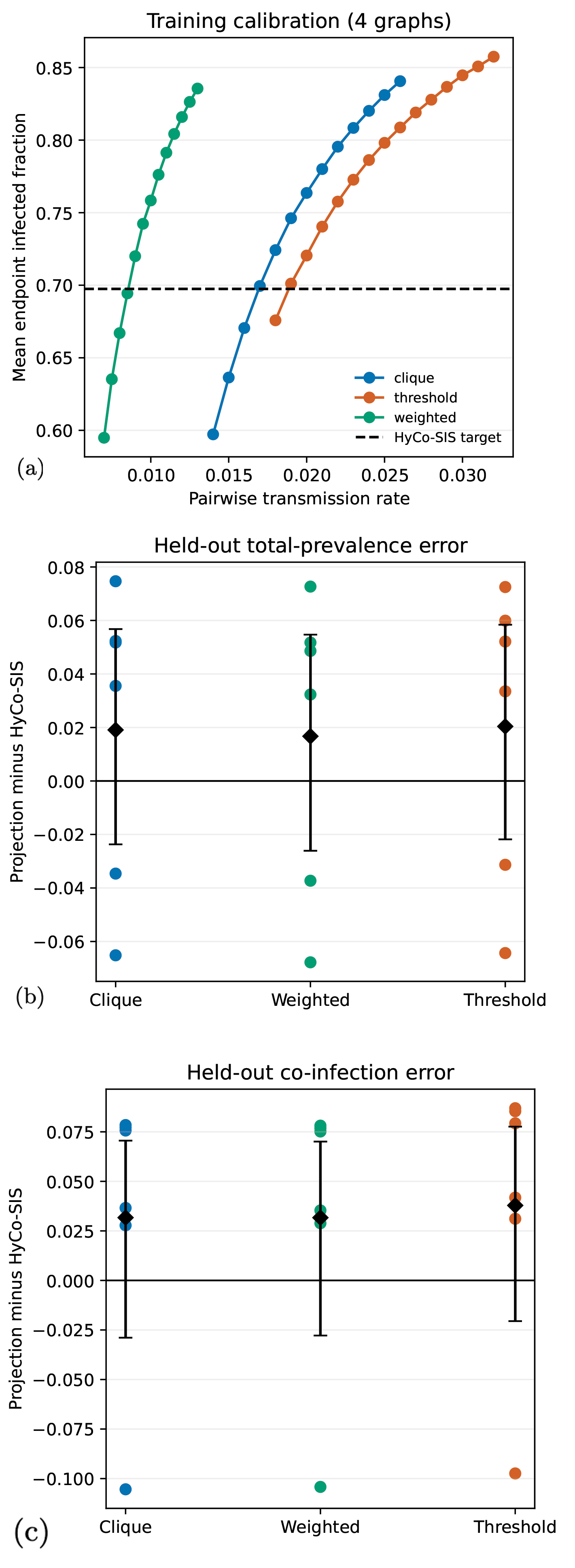}
 \caption{Calibration and held-out errors for the three pairwise projections. (a): calibration on four training graphs. (b) and (c): errors (projection minus HyCo-SIS) in total prevalence and co-infection at step $200$ on six held-out graphs. Points are graph-level errors; diamonds and bars are means and $95\%$ bootstrap intervals.}
  \label{fig:calibrated-projection-evaluation}
\end{figure}

The conditional comparison asks a different question from the unconditional mean. Because changing reinforcement also changes which runs qualify as outbreaks, this is not an estimate of a causal effect among an identical set of successful outbreaks. Non-outbreak runs may retain a small positive prevalence and are not automatically extinct. Define $Y:=\rho_{12}(200)$. Writing $E$ and $P$ for
expectation and probability over the trial-and-graph
sampling described above, the exact identity is
\[
E[Y]
=P(\mathcal{B})E[Y\mid\mathcal{B}]
+E[Y\mathbf1_{\mathcal{B}^c}],
\]
where $\mathcal{B}^c$ denotes the complement of
$\mathcal{B}$. Sensitivity to endpoint cutoffs of 1\% and 10\% is recorded alongside the primary 5\% analysis; no cutoff was selected to maximize an effect. At both sensitivity cutoffs, the conditional-change intervals remain positive for reuse 0.25 and 0.5 and include zero for reuse zero.

At intermediate and high reuse, reinforcement increases both outbreak frequency and the conditional co-infection mean (Table \ref{tab:conditional}). At zero reuse, the conditional-change interval contains zero. In this dataset, the non-outbreak co-infection contribution is zero at both endpoint reinforcement settings under the 5\% rule. These observations help distinguish outbreak-frequency changes from changes within the selected outbreak samples.

 \begin{table}[!t]\small
\centering\small
\setlength{\tabcolsep}{2.5pt}
\begin{tabular}{rrrr}
Reuse &
\shortstack{Outbreak\\frequency (\%)} &
\shortstack{Conditional\\co-infection (\%)} &
\shortstack{Conditional change\\{[95\% CI]} (pp)}\\\hline
0 & 52.67 $\to$ 55.67 & 7.41 $\to$ 7.35 & $-0.06$ [$-0.96$,\,0.79] \\
0.25 & 79.00 $\to$ 86.67 & 15.60 $\to$ 21.37 & 5.77 [3.04,\,8.10] \\
0.5 & 92.67 $\to$ 96.67 & 27.95 $\to$ 41.67 & 13.72 [11.21,\,15.96] \\
\end{tabular}
\caption{Changes from $\gamma=0$ to $0.4$ on the same networks; ten graphs and thirty trials per graph and gamma. Conditional means use only runs meeting the endpoint outbreak rule. }\label{tab:conditional}
\end{table}
\section{Overlap reinforcement and allocation among source groups}\label{sec:allocation}
The previous section varied the strength of overlap reinforcement. Here we keep its strength fixed and ask instead how it is allocated among source groups. We first write the overlap term exactly as a sum of four-person interactions, which makes this allocation explicit. We then compare the original weights with two controlled redistributions, examine their difference on identical network states, and close with an exact five-person calculation
illustrating a mechanism consistent with the observed
acquisition differences.
\subsection{Exact four-person representation}
\label{subsec:four-person}
Consider distinct triangle hyperedges with threshold $r=2$ and binary carrier indicators $X_i$ for one pathogen.
Define
$
a_{ijk}:=\mathbf{1}\{\{i,j,k\}\in E_\Delta\},
$
where $\mathbf{1}\{\cdot\}$ denotes the indicator function.
For a four-vertex set $Q:=\{i,j,k,l\}$ (unrelated to the projection matrices $W$ and $Q$ of Section \ref{sec:model}) define
\[
 W_{i,Q}:=a_{ijk}a_{ijl}+a_{ijk}a_{ikl}+a_{ijl}a_{ikl}.
\]
All sums over $Q$ below run over four-element subsets
of $V$; groups without a contributing triangle pair
have zero original coefficients. The overlap count satisfies the exact identity
\[
 O_i(X)=\sum_{Q\ni i}W_{i,Q}\prod_{j\in Q\setminus\{i\}}X_j.
\]
Indeed, two distinct incident triangles contributing to overlap share one source vertex as well as the target; their activation product reduces using $X_j^2=X_j$.
Grouping these products by their four-vertex union gives the identity without a mean-field approximation.
The coefficients distinguish targets even though the original triangles are undirected.

For zero or one triangle face in $Q$, all four coefficients vanish. Two faces give $(1,1,0,0)$, three give $(3,1,1,1)$, and four give $(3,3,3,3)$, up to permutation.

\begin{proposition}\label{prop:symmetric}
There are constants $c_Q\geq0$ such that $O_i(X)=\sum_{Q\ni i}c_Q\prod_{j\in Q\setminus\{i\}}X_j$ for every target $i$ and every carrier state $X$ if and only if no four-set contains exactly two or three triangle faces.
\end{proposition}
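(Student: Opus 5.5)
The plan is to reduce the functional identity to an identity of coefficients, and then to read off the answer from the face-count classification stated just before the proposition.

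The first step is uniqueness of multilinear coefficients. For a fixed target $i$, both $O_i(X)$, written via the exact identity above, and the proposed symmetric expression are multilinear polynomials in the binary indicators $X_j$, $j\neq i$. Each is a combination of the monomials $\prod_{j\in Q\setminus\{i\}}X_j$ over $3$-subsets $Q\setminus\{i\}$ of $V\setminus\{i\}$. Square-free monomials are linearly independent as functions on $\{0,1\}^{V\setminus\{i\}}$; this can be checked by evaluating on indicator vectors of increasing support, or by Möbius inversion. Hence the two expressions agree for every carrier state $X$ if and only if $W_{i,Q}=c_Q$ for every four-set $Q\ni i$. Requiring this for every target $i$, the representation exists if and only if, for every four-set $Q$, the four target coefficients $(W_{i,Q})_{i\in Q}$ are all equal. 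In that case we set $c_Q$ to the common value, which is nonnegative automatically, and $c_Q:=0$ for four-sets with no contributing pair.

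The second step is to apply the classification by the number of triangle faces in $Q$. I would briefly justify the listed patterns directly from $W_{i,Q}$. Here $W_{i,Q}$ counts unordered pairs of faces of $Q$ that both contain $i$, where the faces are the $3$-subsets that are triangles. Vertex $i$ lies in three of the four $3$-subsets of $Q$, and $W_{i,Q}=\binom{d_i}{2}$, where $d_i$ is the number of triangle faces of $Q$ containing $i$.
\begin{itemize}
\item With zero or one face, every $d_i\le1$, giving $(0,0,0,0)$.
\item With two faces, these share exactly two vertices, which have $d=2$, while the other two have $d=1$; this gives $(1,1,0,0)$.
\item With three faces, the vertex common to all three has $d=3$ and the others have $d=2$; this gives $(3,1,1,1)$.
\item With four faces, every $d_i=3$, giving $(3,3,3,3)$.
\end{itemize}

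The third step is to conclude both directions. The coefficient vectors are constant exactly in the cases of $0$, $1$ or $4$ faces, and nonconstant exactly in the cases of $2$ or $3$ faces. So if no four-set contains exactly two or three faces, we may take $c_Q\in\{0,3\}$ and the identity holds. Conversely, if some $Q$ has two or three faces, its coefficients differ between two targets $i,i'\in Q$. By the uniqueness step, no single $c_Q$ can match both, so no symmetric representation exists.

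The main subtlety is the uniqueness step. It requires stating that the identity must hold for all $X$ but only on binary states, so one must use square-free monomial independence rather than polynomial identity over $\mathbb{R}$. Also, the target's own indicator never appears in these monomials, so the comparison can be made target by target.
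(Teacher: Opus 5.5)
Your proposal is correct and follows essentially the same route as the paper's proof. You use uniqueness of multilinear coefficients on binary states, target by target, to force $c_Q=W_{i,Q}$ for every $i\in Q$, and then apply the face-count list to see that the four coefficients are equal exactly for zero, one, or four faces; your observation $W_{i,Q}=\binom{d_i}{2}$ usefully justifies that list, which the paper states without derivation.
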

\begin{proof}
For a fixed target $i$, the monomials $\prod_{j\in Q\setminus\{i\}}X_j$ with $Q\ni i$ are distinct multilinear monomials, so their coefficients are determined by $O_i$ as a function on binary states. Hence such constants exist if and only if $W_{i,Q}=c_Q$ for every $i\in Q$, that is, if and only if every four-set has equal coefficients at its four vertices. By the list above, this holds for zero, one or four faces and fails for two or three.
\end{proof}

Geometrically, a four-set with exactly two triangle faces is composed of two triangles sharing an edge such as in the edge-reuse construction. Its overlap coefficients are nonzero only at the two shared vertices. A four-set with three faces also has unequal target coefficients. By contrast, a complete four-face tetrahedron has equal coefficients at all four targets.

This characterization determines when symmetric four-person
weights reproduce every target's infection pressure for every
carrier configuration, with the pairwise and triangle transmission
rules held fixed. When such an exact representation is impossible,
a symmetric model might still approximate the overall infected
fraction; the characterization does not address that question.

SIS processes on hypergraphs with indirect spreading pathways have been studied by Cui et al. \cite{cui2024general}. The condition excluding four-sets with exactly two or three triangle faces also appears in the study of hypergraph lines \cite{beaudou2013lines}; Proposition \ref{prop:symmetric} connects it to the existence of an exact symmetric four-person representation of the HyCo-SIS overlap rule.

\subsection{Two controlled redistributions}
Define $\overline W_Q:=\frac14\sum_{i\in Q}W_{i,Q}$.
The symmetric control assigns $\overline W_Q$ to all four
targets in group $Q$. It preserves each group's total overlap
coefficient, but generally changes each target's total coefficient
across its incident groups. Group-strength-preserving symmetrization has precedent in synchronization research \cite{directedsync2022}.
To match those strengths, define
\[
 s_i:=\sum_{Q\ni i}W_{i,Q},\qquad t_i:=\sum_{Q\ni i}\overline W_Q,
\]
\[
V_{i,Q}:=\begin{cases}\overline W_Qs_i/t_i,&t_i>0,\\0,&t_i=0.\end{cases}
\]
Here $s_i$ is target $i$'s original total overlap coefficient,
and $t_i$ is its total under the symmetric control. When $t_i>0$,
multiplication by $s_i/t_i$ restores the original total.
When $t_i=0$, nonnegativity of the original coefficients implies
$s_i=0$, so assigning zero also preserves the total. Thus
$\sum_{Q\ni i}V_{i,Q}=s_i$ for every target.
This second control generally has target-dependent weights and changes individual group totals.
It therefore controls target strength without isolating directionality alone.

Under independent Bernoulli($\pi$) carrier indicators,
where $\pi\in[0,1]$ is the common carriage probability, the original and strength-matched rules have identical expected overlap hazard at each target, $\gamma\pi^3s_i$, and identical expected eligible-target overlap hazard, $\gamma(1-\pi)\pi^3s_i$.
Equality of these expectations does not imply equality of acquisition probabilities $1-\exp(-\lambda_i)$.

There is a simple obstruction to simultaneously retaining symmetric nonnegative weights, existing group support, and original node strengths.
A zero-strength vertex forces every incident symmetric group weight to be zero.
If every group containing a positive-strength vertex also contains a zero-strength vertex, its positive strength cannot be recovered.
All ten networks in the following comparison admit this explicit certificate.
This is a support-specific obstruction, not an impossibility for representations with additional groups.

\subsection{Finite-time comparisons and coupling robustness}
\label{subsec:allocation-results}
We used ten saved networks at edge reuse $0.25$, with thirty matched initialization and simulation seeds per network and model.
Each network has $800$ vertices, 1,600 triangles and 4,800 distinct pairwise contacts.
Parameters were $\beta_1=0.005$, $\beta_h=0.1$, $\gamma=0.4$,
recovery probability $\mu=0.1$, threshold $r=2$, disjoint initial
pathogen fractions $0.02$, and $T=200$ synchronous steps.
All other event rules were identical.
Trials were averaged within networks; descriptive $95\%$ percentile intervals used 20,000 paired graph-block bootstrap resamples. 

\begin{table}[h!]
\centering\small
\begin{tabular}{lrr}
\hline
Model & $\alpha=1$ & $\alpha=2$\\
\hline
Original overlap & $18.52\%$ & $39.70\%$\\
Symmetric control & $32.51\%$ & $53.90\%$\\
Target-strength matched & $28.54\%$ & $47.79\%$\\
\hline
\end{tabular}
\caption{Mean co-infection at step $200$ for the original overlap rule and its two controls, on ten saved networks at edge reuse $0.25$ with $\gamma=0.4$.}
\label{tab:allocation}
\end{table}
The exploratory intervals are not adjusted for multiple comparisons. Table \ref{tab:allocation} lists the mean co-infection at step $200$, and Fig. \ref{fig:allocation-alpha} shows the values for each of the ten networks.
The strength-matched minus original difference was $10.02$ percentage points at $\alpha=1$, with interval $[6.23,13.97]$, and $8.09$ points at $\alpha=2$, with interval $[5.29,10.98]$.
The change in this contrast between $\alpha=1$ and $2$ was $-1.93$ points, with interval $[-6.96,2.00]$.
Thus the contrast persists with increased secondary-acquisition susceptibility, but an amplification or attenuation by that coupling is not established. The sign of the contrast is also consistent with Proposition \ref{prop:exchangeable-gain}, which holds for all nonnegative pairwise and triangle rates.
At $\alpha=1$ the model has no cross-susceptibility reinforcement; the original result alone is not an intrinsically interacting-pathogen effect.

To assess sensitivity to the rate parameters, we repeated
the allocation comparison on the same ten networks with
thirty matched trials per network, fixing $\alpha=1$ and
$\gamma=0.4$. We tested $\beta_h=0.05$ and $0.2$ at
$\mu=0.1$, and separately tested $\mu=0.05$ at
$\beta_h=0.1$, with all other settings unchanged.
Table~\ref{tab:allocation-rate-robustness} reports the results.
The mean target-strength-matched minus original contrast
remained positive in all three checks, with each bootstrap
interval above zero, although its magnitude was substantially
smaller than at baseline.  This is expected from Table \ref{tab:rate-sensitivity}: at $\beta_h=0.05$ co-infection is nearly absent under the original rule, while at $\beta_h=0.2$ and at $\mu=0.05$ it is already close to saturation, leaving little room for any redistribution to act. The allocation effect is therefore largest at intermediate prevalence, as in the baseline.

\begin{table}[!htbp]
\centering
\small
\setlength{\tabcolsep}{3pt}
\renewcommand{\arraystretch}{1.12}
\begin{tabular}{@{}lrrrr@{}}
\hline
$\beta_h$ & 0.10 & 0.05 & 0.20 & 0.10 \\
$\mu$    & 0.10 & 0.10 & 0.10 & 0.05 \\
\hline
Original (\%)
  & 18.52 & 0.0025 & 62.08 & 78.58 \\
Symmetric (\%)
  & 32.51 & 1.95 & 65.71 & 80.30 \\
Strength-matched (\%)
  & 28.54 & 0.89 & 64.33 & 78.71 \\
\hline
Matched minus original (pp)
  & 10.02 & 0.89 & 2.25 & 0.13 \\
95\% CI lower (pp)
  & 6.23 & 0.34 & 0.37 & 0.10 \\
95\% CI upper (pp)
  & 13.97 & 1.55 & 4.16 & 0.16 \\
\hline
\end{tabular}
\caption{%
Rate sensitivity of mean co-infection at step $200$,
with $\alpha=1$ and $\gamma=0.4$.
The first numeric column reproduces the baseline;
the remaining columns give the three additional checks.
All settings use the same ten reuse-$0.25$ networks
and thirty matched trials per network and model,
with all other settings as in the baseline.
All trials are included.
Intervals use $20{,}000$ paired graph-bootstrap
resamples of network means, without multiplicity
adjustment. Differences are in percentage points (pp).
}
\label{tab:allocation-rate-robustness}
\end{table}

\subsection{Identical-state diagnostics}
\label{subsec:identical-state}
We replayed five predetermined trials per network for both original and strength-matched dynamics, saving node states at times $0,50,100,150,199$.
Both hazard rules were evaluated on each identical state and after twenty random permutations of complete four-state node labels.
The permutations preserve compartment counts.
For $K$ uniformly placed carriers on $n$ vertices, a specified target is absent while three distinct specified sources are present with probability
\[
 \frac{n-K}{n}\frac{K(K-1)(K-2)}{(n-1)(n-2)(n-3)}.
\]
Consequently, equal total coefficients give exactly equal expected eligible overlap hazard under permutation.

\begin{figure}[h!]
\centering
\includegraphics[width=0.5\textwidth]{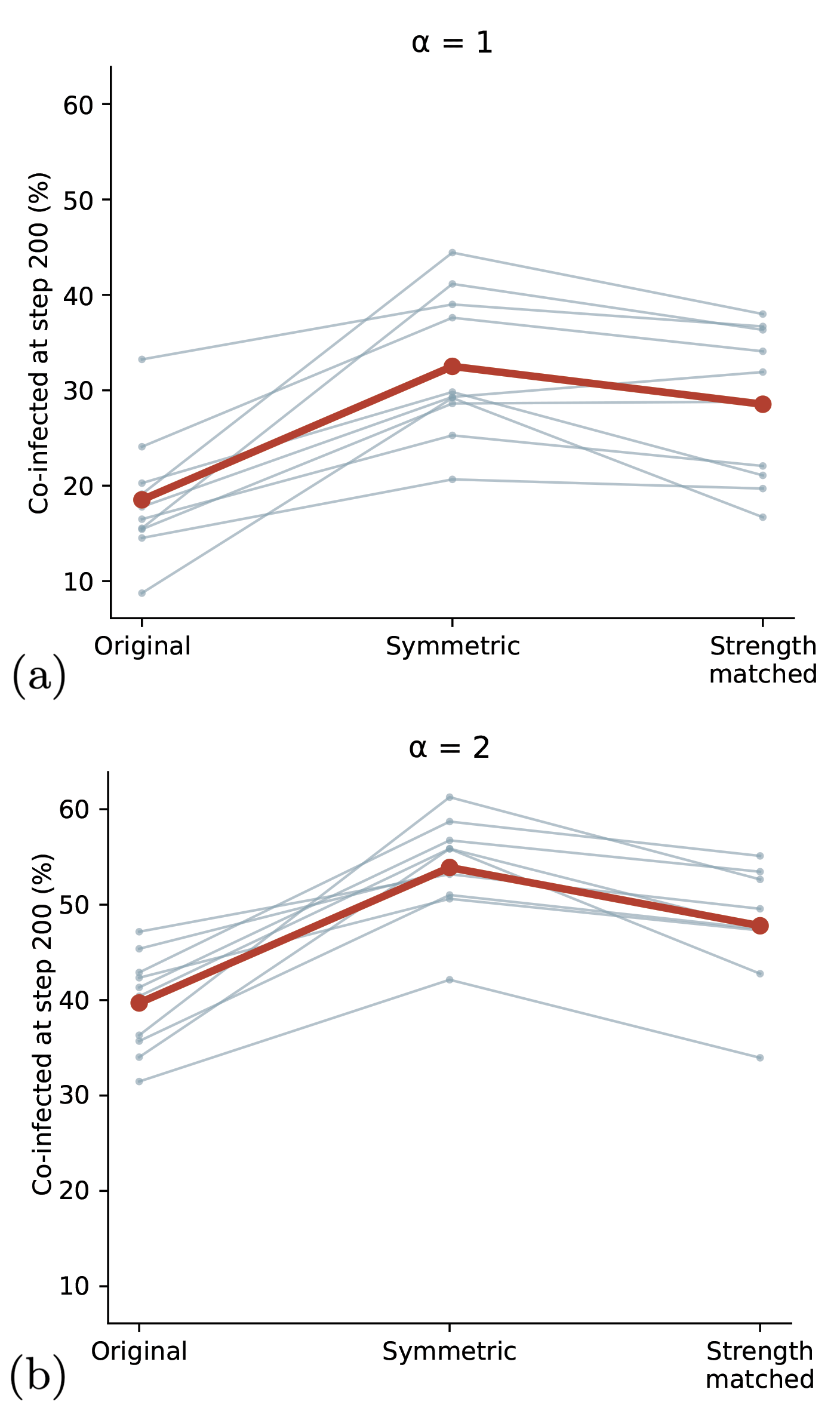}
\caption{Co-infection at step $200$ under the original overlap rule and its two controls, for (a) $\alpha=1$  and (b) $\alpha=2$,  with $\gamma=0.4$ and reuse $0.25$. Thin lines are the thirty-trial averages on each of the ten networks; the thick line is their mean.}
\label{fig:allocation-alpha}
\end{figure}
\begin{figure*}[!t]
\centering
\includegraphics[width=\textwidth]{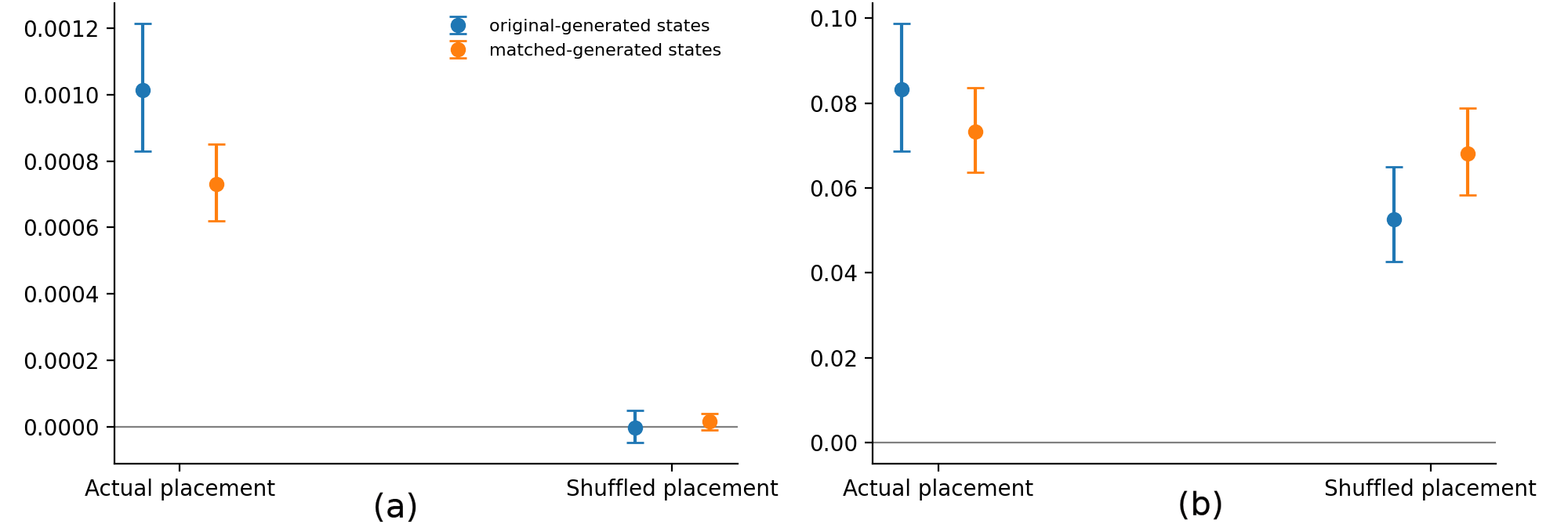}
\caption{Strength-matched minus original rule evaluated on identical states, before and after twenty random permutations of the node labels, for states generated by each rule at $\alpha=1$. (a): eligible overlap hazard. (b): acquisition probability, in percentage points per node per pathogen per step. Error bars are $95\%$ graph-bootstrap intervals.}
\label{fig:allocation-states}
\end{figure*}

On original-generated states (Fig. \ref{fig:allocation-states}), averaging times $150$ and $199$, the strength-matched minus original acquisition-probability difference was $0.0832$ percentage points per population member per pathogen per step, with interval $[0.0687,0.0987]$.
After shuffling it remained $0.0527$ points, with interval $[0.0425,0.0649]$, although the expected overlap-hazard difference vanishes.
The actual-minus-shuffled acquisition contrast was positive for this ensemble, but unresolved for the strength-matched-generated ensemble.
These diagnostics distinguish state-placement effects from differences that survive random placement; they do not decompose the endpoint contrast into causal fractions or distinguish individual prevalence heterogeneity from inter-node correlations.

\subsection{An exact illustrative example}
\label{subsec:five-person}
Return to the five students in Fig. \ref{fig:mechanism}, with triangles $012$, $013$ and $024$, and their induced pairwise edges.
For the supported groups $0123$ and $0124$, target $1$ has original coefficients $(1,0)$, replaced by $(1/2,1/2)$ under target-strength matching; target $2$ has the reversed original coefficients and the same matched coefficients.
Target $0$ retains $(1,1)$; the other targets retain zero overlap strength.

\begin{proposition}[One-step gain under exchangeable initialization]\label{prop:exchangeable-gain}
Consider the single-pathogen process on the five-vertex hypergraph with triangles $012,013,024$ and their pair-edge union, with threshold $r=2$. Write $b:=\beta_1\geq0$, $h:=\beta_h\geq0$, and $g:=\gamma\geq0$ for the pairwise, triangle, and reinforcement rates, respectively. Compare the original overlap coefficients with the target-strength-matched control described above.
Suppose the initial single-pathogen carrier configuration is exchangeable
(its distribution is unchanged by relabeling the vertices).
Let $K_0$ denote the initial number of carriers and define
$q_3:=\mathbb{P}(K_0=3)$.
Let $\rho_{\mathrm{matched}}(1)$ and
$\rho_{\mathrm{original}}(1)$ denote the single-pathogen
carrier fractions after one update under the two rules.
Define
\[
\Delta:=
\mathbb{E}[\rho_{\mathrm{matched}}(1)]
-\mathbb{E}[\rho_{\mathrm{original}}(1)].
\]
With the same recovery probability in both models,
this difference satisfies
\begin{align}
\Delta&=\frac{q_3}{25}(1-e^{-g/2})
\left(e^{-\Lambda_2}-e^{-\Lambda_1-g/2}\right),\\
\Lambda_1&:=3b+2h,\qquad \Lambda_2:=2b+h.
\end{align}
In particular, $\Delta>0$ if $q_3>0$ and $g>0$, for every choice of the pairwise and triangle rates $b,h\geq0$. For independent Bernoulli($\pi$) initial carriers,
\begin{equation}
\Delta(\pi)=\frac{2}{5}\pi^3(1-\pi)^2(1-e^{-g/2})
\left(e^{-\Lambda_2}-e^{-\Lambda_1-g/2}\right).
\end{equation}
\end{proposition}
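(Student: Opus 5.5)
The plan is to reduce $\Delta$ to a finite sum over initial carrier sets, using linearity of expectation. Write $\mathbb{E}[\rho(1)]=\frac15\sum_{v}\mathbb{P}(v\text{ carries at step }1)$. Condition on the initial configuration $X(0)$. A carrier's survival probability $1-\mu$ is the same under both rules. The acquisition probability of a susceptible vertex differs between the rules only through its overlap term. So the per-vertex difference is
\[
\mathbb{E}\bigl[(1-X_v(0))\bigl(e^{-\lambda_v^{\mathrm{orig}}}-e^{-\lambda_v^{\mathrm{match}}}\bigr)\bigr].
\]
Using the coefficients listed just before the statement:
\begin{itemize}
\item vertex $0$ keeps $(1,1)$, and vertices $3,4$ keep zero overlap strength, so these three vertices contribute nothing;
\item only targets $1$ and $2$ can contribute.
\end{itemize}

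For target $1$, the original rule puts weight $1$ on the monomial $X_0X_2X_3$. The matched rule puts weight $\tfrac12$ on $X_0X_2X_3$ and $\tfrac12$ on $X_0X_2X_4$. I would go through the carrier sets with $1$ susceptible.
\begin{itemize}
\item If neither $\{0,2,3\}$ nor $\{0,2,4\}$ is contained in the carrier set, both overlap terms vanish.
\item If both are contained (carriers $\{0,2,3,4\}$), both totals equal $g$. There is no difference.
\item Carriers exactly $\{0,2,3\}$: $1$ has carrier neighbours $0,2,3$, and triangles $012,013$ are active. So the base pressure is $\Lambda_1=3b+2h$. The overlap is $g$ (original) versus $g/2$ (matched). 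The contribution is $e^{-\Lambda_1-g}-e^{-\Lambda_1-g/2}=-e^{-\Lambda_1-g/2}(1-e^{-g/2})$.
\item Carriers exactly $\{0,2,4\}$: only $0,2$ are carrier neighbours of $1$, and only $012$ is active. So the base pressure is $\Lambda_2=2b+h$. The overlap is $0$ versus $g/2$. The contribution is $e^{-\Lambda_2}(1-e^{-g/2})$.
\end{itemize}

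Target $2$ is handled by the symmetry $1\leftrightarrow2$, $3\leftrightarrow4$ of the hypergraph. Its original weight sits on $\{0,1,4\}$ and its matched weights on $\{0,1,4\},\{0,1,3\}$. The set $\{0,1,4\}$ gives pressure $3b+2h$ (triangles $012,024$), and $\{0,1,3\}$ gives $2b+h$. So target $2$ produces the same two terms.

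Exchangeability gives each specific $3$-subset probability $q_3/\binom53=q_3/10$. Summing the four contributions and dividing by $n=5$ gives
\[
\Delta=\frac15\cdot\frac{q_3}{10}\cdot2\,(1-e^{-g/2})\bigl(e^{-\Lambda_2}-e^{-\Lambda_1-g/2}\bigr),
\]
which is the stated $q_3/25$ formula.

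For positivity, $\Lambda_1+g/2-\Lambda_2=b+h+g/2>0$ when $g>0$. Together with $1-e^{-g/2}>0$, this gives $\Delta>0$ whenever $q_3>0$, for all $b,h\ge0$. For Bernoulli$(\pi)$ carriers, $q_3=10\pi^3(1-\pi)^2$, which gives the $\tfrac25$ prefactor.

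The main obstacle is the careful bookkeeping in the case analysis. One must check that:
\begin{itemize}
\item the base pressures $N$ and $R$ are computed correctly from the induced pair edges ($1$ is not adjacent to $4$, and $2$ is not adjacent to $3$);
\item the four-carrier configurations genuinely cancel;
\item no other vertex's hazard differs between the rules.
\end{itemize}
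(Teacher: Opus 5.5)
Your proof is correct and follows essentially the same route as the paper: condition on the initial carrier set and cancel the common recovery term. Then observe that only targets $1$ and $2$ are affected, and only through the three-carrier sets $023,024$ and $013,014$, and average the resulting gain $F$ with weight $q_3/10$ over the five vertices to get $2F/50=F/25$. You organize the cases by which source triples are contained rather than by carrier count as the paper does, and you check $\Lambda_1,\Lambda_2$ and positivity (via $b+h+g/2>0$) explicitly, but this is only more detailed bookkeeping of the same argument.
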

\begin{proof}
The supports $0123$ and $0124$ have original coefficient rows
$(1,1),(1,0),(0,1)$ for targets $0,1,2$, respectively.
The matched rows are $(1,1),(1/2,1/2),(1/2,1/2)$;
targets $3,4$ have zero rows.
With fewer than three carriers, no reinforcement activates.
With four carriers, the sole eligible target has all its sources
infected, so equal row sums imply equal hazards. With five carriers
there is no eligible target.
Conditional on three carriers, the ten configurations are equiprobable.
For target $1$, configurations $023$ and $024$ change the hazard pair
from $(\Lambda_1+g,\Lambda_2)$ to $(\Lambda_1+g/2,\Lambda_2+g/2)$.
Target $2$ has the same change for configurations $014$ and $013$.
All remaining eligible target--configuration pairs are unchanged.
For either changed pair the sum of acquisition gains is
\begin{align}
F&:=e^{-\Lambda_1-g}+e^{-\Lambda_2}-e^{-\Lambda_1-g/2}-e^{-\Lambda_2-g/2}\\
&= (1-e^{-g/2})(e^{-\Lambda_2}-e^{-\Lambda_1-g/2}).
\end{align}

Targets $1$ and $2$ each contribute a total acquisition gain $F$ across the ten equally likely three-carrier configurations.

Averaging over configurations and five vertices, then canceling the common recovery contributions, we obtain $2F/(10\cdot5)=F/25$. Multiplying by $q_3$ gives the first statement. For independent Bernoulli($\pi$) initialization, $q_3=\binom{5}{3}\pi^3(1-\pi)^2=10\pi^3(1-\pi)^2$.
\end{proof}

This calculation gives a sufficient mechanism for an acquisition gain:
the redistribution balances two total hazards while preserving their sum.
More generally, if nonnegative hazards $U,V$ admit a coupling with
$\mathbb{E}[U\mid V]=V$, the conditional Jensen inequality implies
$\mathbb{E}[1-e^{-V}]\geq\mathbb{E}[1-e^{-U}]$.
This standard concavity argument concerns the full hazard distribution
among eligible targets; equality of coefficient row sums alone is
insufficient. The proposition is a one-step result on the specified
motif, not an ordering theorem for the large-network trajectories or
co-infection prevalence. Exchangeability need not persist under evolution.

At the experimental rates $b=\beta_1=0.005$, $h=\beta_h=0.1$, and $g=\gamma=0.4$, uniform initialization with exactly three carriers corresponds to $q_3=1$ and gives a gain of approximately $0.171$ percentage points in expected next-step single-pathogen prevalence. The proposition supplies an exact mechanism consistent with the acquisition differences surviving random placement in Fig. \ref{fig:allocation-states}: matching mean hazard can leave the nonlinear acquisition probability changed. It does not establish that the full networks satisfy the required hazard-balancing condition, or predict their endpoint co-infection contrasts in Fig. \ref{fig:allocation-alpha}.

Taken together, the comparisons of this section demonstrate sensitivity to the specified allocation of overlap coefficients among source groups in the tested regime.
They do not establish equilibrium transitions, biological validation, or a new general class of directed contagion models.

\section{Discussion}
\label{sec:discussion}

We started from a simple question: which features of group contact must a model keep before it can reliably compare spreading outcomes? Within HyCo-SIS, our results show that the allocation of
higher-order interactions among source groups matters beyond
their aggregate strength. Our main findings can be summarized as follows.
\begin{itemize}
\item {\bf Overlap is a four-person effect.} Overlap reinforcement is exactly a sum of target-dependent four-person interactions, and it admits a symmetric description precisely when no four people share exactly two or three triangles (Proposition \ref{prop:symmetric}). Four-person sets containing exactly two or three triangle faces obstruct an exact symmetric replacement, whereas a complete four-face tetrahedron admits one.
\item {\bf Seed size and overlap affect spreading.} Near the disease-free state only the pairwise channel survives the mean-hazard linearization, yet the outbreak probability rises from about $0.14$ to $0.89$ as the seed fraction grows from $0.005$ to $0.04$. Once groups are active, raising $\gamma$ from $0$ to $0.4$ increases endpoint co-infection by $6.2$ and $14.4$ percentage points at edge reuse $0.25$ and $0.5$, whereas the estimated increase at reuse zero is small and its interval includes zero  (Sections \ref{subsec:seed} and \ref{subsec:overlap-results}).
\item {\bf Matching one number is not enough.} Pairwise projections calibrated to total prevalence spread earlier than HyCo-SIS, overestimate co-infection over the final fifty steps, and turn every held-out run into an outbreak (Section \ref{subsec:projections}).
\item {\bf Allocation matters.} In the tested networks, the target-strength-matched
redistribution raises endpoint co-infection by about
10 percentage points at $\alpha=1$ and 8 at $\alpha=2$.
Proposition~\ref{prop:exchangeable-gain} provides an exact one-step mechanism
consistent with the observed acquisition differences,
based on the concavity of $1-e^{-\lambda}$ (Section \ref{sec:allocation}).
\end{itemize}

The results actually demonstrate something intuitive: two classrooms can have the same amount of shared group contact per student and still spread a second infection very differently, simply because of who shares which group with whom. In the language of the five students of Fig. \ref{fig:mechanism}, it matters whether the two active groups around student $1$ lean on the same classmate. The pairwise projections studied here do not explicitly retain which contacts belong to the same group.

Most experiments use one representative rate setting;
the allocation comparison additionally tests two
triangle-transmission rates and a lower recovery
probability. All simulations use synthetic triangle
hypergraphs and a horizon of 200 steps, so we make no
claims about equilibria, phase transitions or bistability.
The overlap term is a modeling assumption rather than
an established biological mechanism. The disease-free analysis applies to the mean-hazard closure rather than to the finite stochastic process, and the projection results concern the three projections and the calibration target studied here; more flexible pairwise models can do better \cite{tan2026replicating}, and exact projection ambiguity is a different question from approximate prediction \cite{salova2021projection}. Larger hyperedges, longer horizons and more structural regimes are the natural next steps.

 The classroom of Fig. \ref{fig:mechanism} was not chosen by accident: indeed, schools are one of the places where overlapping groups are easiest to picture, and also one of the places where they have actually been measured. The SocioPatterns high school dataset records contacts among students in nine classes of a high school in Marseilles over five days in December 2013, with anonymous identifiers and 20-second resolution \cite{sociopatterns2013data,mastrandrea2015contacts}. These are pairwise proximity measurements, not observations of how an infection spreads, but they are exactly the kind of raw material from which the hypergraphs studied here could be built.

A natural way to do so would be to take the pairwise contact graph of each time bin and to record a triangle only when all three pair contacts occur in that same bin, keeping track of which triangles share a pair. Aggregating the whole week first is tempting, but it would glue together contacts that never happened at the same time, so any results should be checked against the choice of time bin and of the required co-occurrence duration. Running HyCo-SIS on such data would also require a mapping from seconds of contact to transmission and recovery probabilities, since our synthetic parameters are not calibrated physical rates.

Even without infection histories, which these files do not contain, contact-derived hypergraphs would let us ask whether real classrooms look more like our low-reuse or our high-reuse ensembles, and how much the allocation of overlap among real groups changes co-infection. We think this is a promising direction, and one we hope to come back to.

Finally,  HyCo-SIS can be read as a stochastic system with two kinds of activity at each site: infection switches an activity on, recovery switches it off, and a co-infected node carries both. Pairwise contacts and group activation are two distinct routes through which the activity of neighbors switches on new activity, and the overlap term adds a third route that depends on how the groups themselves intersect.

From this point of view, the redistribution experiments ask a question that is familiar in statistical physics: if every site keeps the same total interaction strength, does the collective behavior stay the same? At least over the finite horizons we simulated, it does not, and the arrangement of interactions among source groups changes the outcome. The projection experiments ask the companion question of which collective observables a simpler interaction model can still reproduce. Our analysis concerns the specific HyCo-SIS overlap rule, and should be distinguished from directed contagion models and from symmetrization methods \cite{directedcontagion2024,directedsync2022}; still, we suspect the lesson is more general, namely that the way many-body interactions are arranged can matter as much as how strong they are.

\section*{Declarations}

\smallbreak
\noindent\textbf{ Code and data.}
The experiments use synthetic networks. Our code and data are available at \href{https://github.com/KatrinaLiu6/hypergraph-co-infection}{https://github.com/KatrinaLiu6/hypergraph-co-infection}.
\smallbreak
\noindent\textbf{Acknowledgements.} 
 LPS is grateful to Magdalen College and the Mathematical Institute at the University of
Oxford for their hospitality and support during the Visiting Fellowship in the course of
which this work was completed. The authors are thankful to the MIT PRIMES-USA program for its support and the opportunity to conduct this research together.

\smallbreak
\noindent\textbf{Competing interests.} The authors have no relevant financial or non-financial interests to disclose.
\smallbreak
\noindent\textbf{Author contributions.}
Conceptualization: Katrina Liu, Laura P. Schaposnik; Methodology and formal analysis:  Katrina Liu; Software and numerical investigation:  Katrina Liu; Writing: Katrina Liu, Laura P. Schaposnik; Supervision: Laura P. Schaposnik. Both authors read and approved the final manuscript.
\smallbreak
\noindent\textbf{AI-assisted technologies.} ChatGPT (OpenAI) and Claude (Anthropic) were used during manuscript preparation to assist with language refinement. 
The authors used OpenAI ChatGPT 5.6-Sol, Codex, and Grammarly in September 2026 for language polishing, grammar checking and general consistency checks. Throughout the research,  the authors used ChatGPT and Codex for code editing and debugging averaging approximately one session per week.  All AI-assisted output was reviewed and edited by the authors, taking responsibility for the final manuscript.

\smallbreak
\noindent\textbf{Funding.} The research of LPS is partially supported by NSF FRG Award DMS2152107 and NSF CAREER
Award DMS 1749013, and also supported in part by grants from the NSF (DMS2235451) and
Simons Foundation (MP-TMPS-00005320) to the NSF-Simons National Institute for Theory and
Mathematics in Biology (NITMB), and by a Simons Travel Award.
\begin{figure*}[t!]
\centering
\includegraphics[width=.88\textwidth]{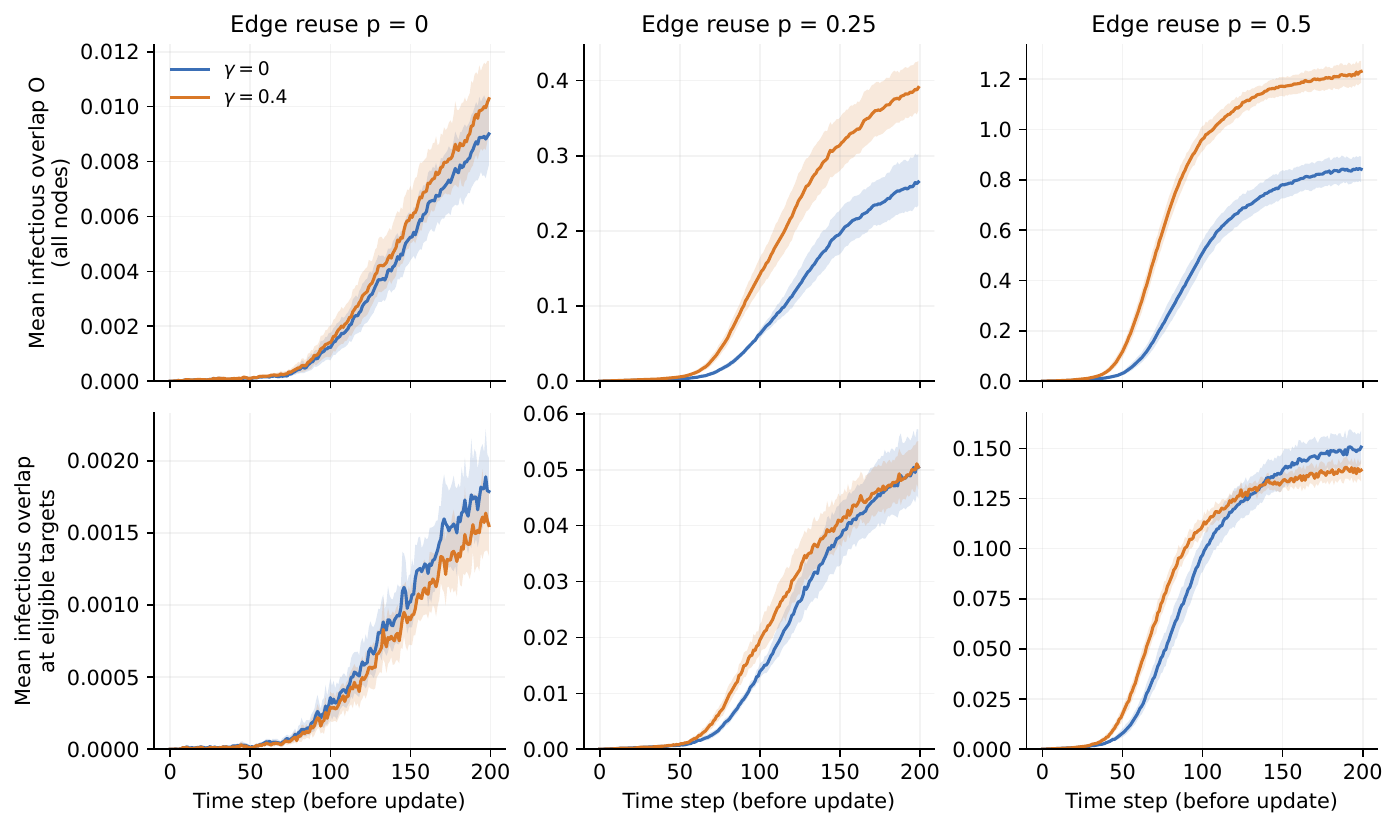}
\caption{Infectious overlap before each update for $\gamma=0$ and $\gamma=0.4$ at edge-reuse probabilities $0$, $0.25$, and $0.5$. Top panels show overlap averaged over all targets; bottom panels show overlap at targets lacking the relevant pathogen, normalized by the full population size. Values are averaged over the two pathogens and trials within each graph. Bands are pointwise $95\%$ bootstrap intervals obtained by resampling whole graph trajectories. The panels use different vertical scales.}
\label{fig:overnight-pressure}
\end{figure*}
\appendix
\section{Infectious overlap trajectories}
\label{app:overlap}

Figure \ref{fig:overnight-pressure} reports the infectious overlap
$O$ recorded before each update in the reinforcement experiments
of Section \ref{subsec:overlap-results}. The all-target averages
are larger on the higher-reuse ensembles. Infectious overlap can
be nonzero at $\gamma=0$, although its contribution $\gamma O$
to the infection pressure then vanishes.

\smallbreak 
\renewcommand{\bibsection}{\section*{References}}
\bibliography{mybib}

\end{document}